\definecolor{darkgreen}{rgb}{0.164706, 0.384314, 0.0941176}
\newcommand{\Up}{\ensuremath{u}\xspace}
\newcommand{\Down}{\ensuremath{d}\xspace}
\newcommand{\Right}{\ensuremath{r}\xspace}
\newcommand{\Left}{\ensuremath{l}\xspace}
\newcommand{\decProb}{\textsc{Min-Gathering}\xspace}
\newcommand{\Part}{\ensuremath{\mathcal{P}}\xspace}
\newcommand{\diam}{\ensuremath{D}\xspace}
\DeclareMathOperator{\dist}{dist}
\title{Targeted Drug Delivery: Algorithmic Methods for Collecting a Swarm of Particles with Uniform External Forces}
\titlerunning{Collecting a Swarm of Particles with Uniform External Forces}
\author{Aaron T. Becker}{Department of Electrical and Computer Engineering, University of Houston, USA}{atbecker@uh.edu}{https://orcid.org/0000-0001-7614-6282}{}
\author{Sándor P. Fekete}{Department of Computer Science, TU Braunschweig, Braunschweig, Germany}{s.fekete@tu-bs.de}{https://orcid.org/0000-0002-9062-4241}{}
\author{Li Huang}{Department of Electrical and Computer Engineering, University of Houston, USA}{lhuang28@uh.edu}{https://orcid.org/0000-0001-9559-7724}{}
\author{Phillip Keldenich}{Department of Computer Science, TU Braunschweig, Braunschweig, Germany}{keldenich@ibr.cs.tu-bs.de}{https://orcid.org/0000-0002-6677-5090}{}
\author{Linda Kleist}{Department of Computer Science, TU Braunschweig, Braunschweig, Germany}{kleist@ibr.cs.tu-bs.de}{https://orcid.org/0000-0002-3786-916X}{}
\author{Dominik Krupke}{Department of Computer Science, TU Braunschweig, Braunschweig, Germany}{krupke@ibr.cs.tu-bs.de}{https://orcid.org/0000-0003-1573-3496}{}
\author{Christian Rieck}{Department of Computer Science, TU Braunschweig, Braunschweig, Germany}{rieck@ibr.cs.tu-bs.de}{https://orcid.org/0000-0003-0846-5163}{}
\author{Arne Schmidt}{Department of Computer Science, TU Braunschweig, Braunschweig, Germany}{aschmidt@ibr.cs.tu-bs.de}{https://orcid.org/0000-0001-8950-3963}{}
\authorrunning{A. T. Becker et al.}
\keywords{TILT, \NP-completeness, particles, gathering, reinforcement learning}
\begin{document}

    \maketitle

    \begin{abstract}
		We investigate algorithmic approaches for targeted drug delivery in a complex, maze-like environment, such as a vascular system. 
		The basic scenario is given by a large swarm of micro-scale particles (``agents'') and a particular target region (``tumor'') within a system of passageways.
		Agents are too small to contain on-board power or computation and are instead controlled by a global external force that acts uniformly on all particles, such as an applied fluidic flow or electromagnetic field. 
		The~challenge is to deliver all agents to the target region with a minimum number of actuation steps.
		
		We provide a number of results for this challenge. 
		We show that the underlying problem is \NP-complete, which explains why previous work did not provide provably efficient algorithms.
		We~also develop several  algorithmic approaches that greatly improve the worst-case guarantees for the number of required actuation steps.
		We evaluate our algorithmic approaches by numerous simulations, both for deterministic algorithms and searches supported by deep learning,	which show that the performance is practically promising.
    \end{abstract}


\section{Introduction}\label{section:tilt-gathering}
A crucial challenge for a wide range of vital medical problems,
such as the treatment of cancer, localized infections and inflammation,
or internal bleeding is to deliver active substances to a specific
location in an organism. The traditional approach of administering
a sufficiently large supply of these substances into the circulating blood 
may cause serious side effects,
as the outcome intended for the target site may also occur
in other places, with often undesired, serious consequences. 

Moreover, novel custom-made substances that are specifically designed
for precise effects are usually in too short supply 
to be generously poured into the blood stream. In the 
context of targeting brain tumors, as illustrated in~\cref{fig:brains}, an 
additional difficulty is the blood-brain barrier. 
This makes it necessary to develop other, more focused methods for delivering substances to specific target regions. 
\begin{figure}[ht]
	\centering
	\includegraphics[scale=0.31]{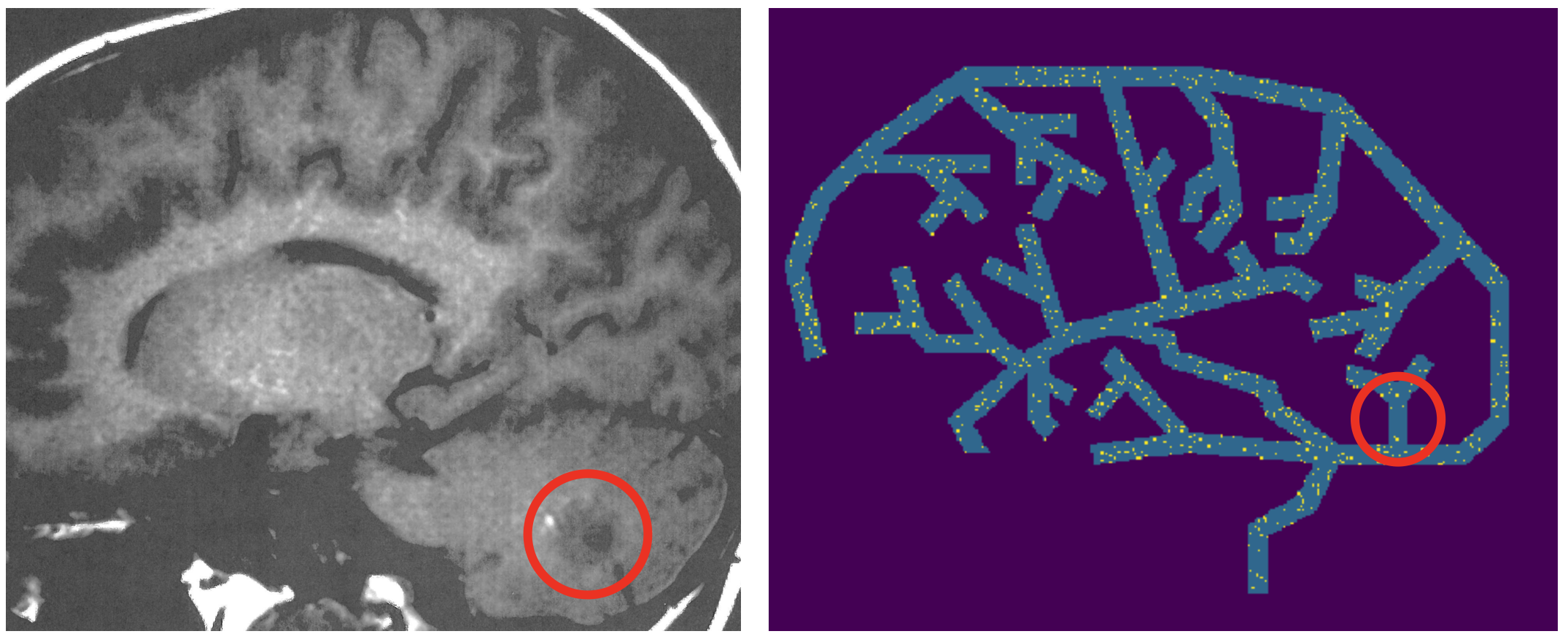}\hfill
	\caption{(Left) An MRI image of a brain tumor (marked by the red circle), located in the cerebellum.
		(Right) How can the particles (indicated by yellow dots) be delivered to the target~region?}
	\label{fig:brains}
\end{figure}

Given the main scenario of medical applications, this requires dealing with navigation through complex vascular systems, in which access to a target location is provided by pathways (in form of blood vessels) through a maze of obstacles. 
However, the microscopic size of particles necessary for passage through these vessels makes it prohibitively difficult to store sufficient energy in suitably sized microrobots, in particular in the presence of flowing~blood. 

A promising alternative is offered by employing a global external force, for example, a fluidic flow or an electromagnetic field. 
When such a force is applied, all particles move in the same direction by the same distance, unless they are blocked by obstacles in their way. 
While this makes it possible to move all particles at once, it introduces the difficulty of using \emph{uniform} forces for many particles in \emph{different} locations with different local topology to navigate them to \emph{one} final destination. 
Previous work~\cite{gathering_swarm} described a basic approach that delivers all particles in a grid environment with $n$ grid cells to a target in at most $\mathcal{O}(n^3)$ parallel moves.
This shows that delivery can always be achieved; however, a delivery time of this magnitude is usually impractical, which is why we are interested in improvements.

In particular, the problem we consider is stated as follows.

\medskip
\noindent\textsc{\underline{Tilt Gathering Problem (Min-Gathering)}}

\smallskip
\noindent\textsc{Given:} A polyomino $P$, a set of particles within $P$, and an integer $\ell$.\\
\noindent\textsc{Task:} Decide whether there exists a gathering sequence of length at most $\ell$.
\medskip

\subsection{Related work.}
We seek to understand control for large numbers of microrobots, and use a generalized model that could apply to a variety of drug-carrying microparticles.
An example are particles with a magnetic core and a catalytic surface for carrying medicinal payloads~\cite{litvinov2012high,pouponneau2009magnetic}.
An alternative are aggregates of \emph{superparamagnetic iron oxide microparticles}, 9 $\mu$m particles that are used as a contrast agent in MRI studies~\cite{mellal2015magnetic}. Real-time MRI scanning can allow feedback control using the location of a swarm of these particles.

Steering magnetic particles using the magnetic gradient coils in an MRI
scanner was implemented in~\cite{mathieu2007magnetic, pouponneau2009magnetic}.
3D Maxwell-Helmholtz coils are often used for precise magnetic field control~\cite{mellal2015magnetic}. Still needed are motion planning algorithms to guide
the swarms of robots through vascular networks.
First techniques for controlling many simple robots with
uniform control inputs presented in~\cite{reconfiguring_swarm,computation3_swarms,computation_swarms}; 
see video and abstract~\cite{bmd+-pcdfbm-15}
for a visualizing overview.
For a recent survey on challenges related to controlling multiple microrobots (less than \num{64} robots at a time), see~\cite{Chowdhury2015}.
Other related work includes assembling shapes by global control (e.g., see~\cite{full_tilt}) or rearranging particles in a
rectangle of agents in a confined workspace~\cite{zhang2017rearranging,zhang2018assembling}.
Konitzny et al.~\cite{KonitznyLLFB22} consider reinforcement methods to tackle the problem of gathering physical particles.

As the underlying problem consists of bringing together a number of agents in 
one location, a highly relevant algorithmic line of research 
considers \emph{rendezvous search}, which requires two or more
independent, intelligent agents to meet.  Alpern and Gal~\cite{alpernbook}
introduced a wide range of models and methods for this concept as have Anderson
and Fekete~\cite{anderson2001two} in a two-dimensional geometric setting. Key
assumptions include a bounded topological environment and robots with limited
onboard computation.  This is relevant to maneuvering particles through worlds
with obstacles and implementation of strategies to reduce computational burden
while calculating distances in complex worlds~\cite{meghjani2012multi,zebrowski2007energy}. In a
setting with autonomous robots, these can move independent of each other, i.e.,
follow different movement protocols, called \emph{asymmetric} rendezvous in the
mathematical literature~\cite{alpernbook}. If the agents are required to
follow the same protocol, this is called \emph{symmetric} rendezvous. This
corresponds to our model in which particles are bound by the uniform motion
constraint; symmetry is broken only by interaction with the obstacles.
For an overview of a variety of other algorithmic results on 
gathering a swarm of autonomous robots, see the survey by Flocchini~\cite{flocchini2019distributed};
note that these results assume a high degree of autonomy and computational 
power for each individual agent, so their applicability for our scenarios is
quite limited.

\subsection{Preliminaries.}
This paper focuses on planar workspaces $P$, consisting of orthogonal sets of \emph{cells} or \emph{pixels}, that form an edge-to-edge connected domain in the integer planar grid, i.e., a \emph{polyomino}, as, e.g., illustrated in~\cref{fig:hardness}.
We consider particles without autonomy and of size that is insignificant compared to the elementary cells of~$P$.
Each pixel can be referenced by its Cartesian coordinates $\bm{x}=(x,y)$.  
A pixel in $P$ is considered \emph{free} for particles entering in the manner described below; pixels not belonging to $P$ are \emph{blocked}, i.e., they form obstacles that stop the motion of particles from an adjacent cell.
The particles are \emph{commanded} in unison: In each step, all particles are relocated by one unit in one of  the directions ``Up'' (\Up), ``Down'' (\Down), ``Left'' (\Left), or ``Right'' (\Right), unless the destination is a blocked pixel; in this case, a particle remains in its previous pixel.
Because the particles are small, many of them can be located in the same pixel.
During the course of a command sequence, two particles $\pi_1$ and $\pi_2$ may end up in the
same pixel $p$, if $\pi_1$ moves into $p$, while $\pi_2$ remains in $p$ due to a blocked pixel.
Once two particles share a pixel, any subsequent command will relocate them in unison, i.e., they will not be separated, so they can be considered to be~\emph{merged}.

A \emph{motion plan} is a command sequence $C=\langle c_1,c_2,c_3,\dots\rangle$, where each command ${c_i\in\{\Up,\Down,\Left,\Right\}}$. 
For a command sequence $C$ and a non-negative integer~$\ell$, we denote the command sequence consisting of $\ell$ repetitions of $C$ by $C^\ell$.

The distance $\dist(p,q)$ between two pixels $p$ and $q$ is the length of a shortest path on the integer grid between $p$ and $q$ that stays within $P$.
The \emph{diameter} $\diam$ of a polyomino is the maximum distance between any two of its cells. 
A \emph{configuration} of $P$ is a set of cells containing at least one particle. 
The set of all possible configurations of~$P$ is denoted~by~$\Part$.
We call a command sequence \emph{gathering} if it transforms a configuration $A\in\Part$ into a configuration $A'$ such that $|A'|=1$, i.e., if it merges all particles in the same pixel.

\subsection{Our Contribution.}
In this paper, we provide a number of insights:
\begin{itemize}
\item We prove that  minimizing the length of a command sequence for gathering all particles is \NP-complete, even for environments that consist of grid cells in the plane, so no polynomial-time algorithms can be expected. 
This explains the observed difficulty of the problem, and also implies hardness for the related localization problem.
\item We develop an algorithmic strategy for gathering all particles with a worst-case guarantee of at most $\mathcal{O}(kD^2)$ steps; here $D$ denotes the maximum distance between any two points of the environment and $k$ the number of its convex corners. 
Both $k$ and $D$ are usually much smaller than the number $n$ of grid locations in the environment: $n$ may be in $\Omega(D^2)$, for two-dimensional and in $\Omega(D^3)$ for three-dimensional environments.
\item For the special case of hole-free environments, we can gather all particles in $\mathcal{O}(kD)$ steps.
\item We apply deep learning to search for short command sequences in individual, complex instances, and perform a simulation study of various approaches, evaluating the respective performance for application-inspired instances.
\end{itemize}


\section{\NP-completeness for thin polyominoes}\label{sec:gathering-hard}
In this section, we show that the problem is \NP-complete, even for thin polyominoes. 
Our proof is a reduction from \textsc{3Sat}.
Therefore, we construct for every instance~$\varphi$ of~\textsc{3Sat} a thin polyomino; for an exemplary illustration, see~\cref{fig:hardness}.
Within the polyomino, we place one particle in the top of every clause gadget, and one particle in the top left corner of each variable block.
We then argue, that a gathering sequence of a certain length exists, if and only if the formula is satisfiable. 

\begin{theorem}\label{thm:Hard}
	\decProb is \NP-complete, even for thin polyominoes.
\end{theorem}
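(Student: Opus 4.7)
The plan is to show membership in \NP and then give a polynomial reduction from \sat. For membership, any gathering sequence of length $\ell$ is a polynomial-size certificate, verified by simulating each command on the current set of occupied cells; the target length $\ell$ in the reduced instances will be polynomial in the size of the constructed polyomino, so the certificate stays polynomial in the input size.

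For hardness, given a formula $\varphi$ with variables $x_1,\dots,x_n$ and clauses $C_1,\dots,C_m$, I would build in polynomial time a thin polyomino $P_\varphi$, as suggested by the section preamble and \cref{fig:hardness}, consisting of $n$ variable blocks $B_1,\dots,B_n$ stacked vertically and $m$ clause channels $G_1,\dots,G_m$ routed through them along width-one corridors, with one particle placed at the top-left corner of each $B_i$ and one at the top of each $G_j$. The bound $\ell$ is chosen to force any successful gathering sequence to have the form of a length-$2n$ \emph{assignment prefix} whose $i$-th pair of commands lies in $\{\Left\Down,\Right\Down\}$, followed by a fixed, gadget-independent \emph{sweep suffix} that carries every particle to a common collection cell. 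The prefix then encodes a truth assignment $\alpha\colon\{x_1,\dots,x_n\}\to\{0,1\}$, while the sweep succeeds exactly when $\alpha$ satisfies $\varphi$.

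Designing the gadgets is the main obstacle. Each variable block $B_i$ will be a serpentine corridor with two alternative continuations reachable only by the substrings $\Left\Down$ and $\Right\Down$, respectively, so that the block's own particle reaches the next block iff exactly one of these two pairs is executed at steps $2i{-}1,2i$; conversely, the clause channel $G_j$ passes through the three blocks corresponding to its literals, and its segment inside $B_i$ is kept traversable by the sweep iff the commands at steps $2i{-}1,2i$ set the corresponding literal to \emph{true}. Consequently, the top particle of $G_j$ reaches the collection cell during the sweep iff at least one of its literals is true under $\alpha$. Thinness is preserved because every gadget is a chain of unit-width corridors; unwanted interactions between particles must be prevented by inserting short buffer corridors, and the precise choice of $\ell$ must rule out any sequence deviating from the prefix-plus-sweep template (this tightness argument is the most delicate step). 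The two directions of the equivalence then follow: a satisfying assignment yields an $\ell$-step sequence by construction, and any $\ell$-step sequence induces, via its prefix, an assignment satisfying $\varphi$. Since $P_\varphi$ and $\ell$ have size polynomial in $n+m$, the reduction is polynomial and the theorem follows.
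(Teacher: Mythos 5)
Your overall strategy---reduce from \sat, encode a truth assignment by left/right choices inside variable gadgets, and let each clause particle reach the collection cell only if one of its literals is chosen \emph{true}---is the same idea as the paper's reduction, and the membership argument is fine. However, the proposal has a genuine gap exactly where the real work lies: the ``only if'' direction. You require that every gathering sequence of length $\ell$ decompose into a length-$2n$ assignment prefix whose $i$-th pair of commands lies in $\{\Left\Down,\Right\Down\}$, followed by a fixed sweep, and you yourself flag this tightness argument as ``the most delicate step'' without supplying it. Nothing in your construction, as described, prevents an adversarial sequence from interleaving \Up moves, waiting (blocked moves cost nothing locally, only budget), deciding variables out of order, or moving a clause particle along a path that is inconsistent with the variable particle's trajectory; since a clause channel passes through three different variable blocks, you also need a mechanism guaranteeing that all particles ``read'' each variable consistently. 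Without an explicit gadget geometry and a quantitative argument pinning down the admissible sequences, the equivalence ``$\ell$-step sequence $\Rightarrow$ satisfying assignment'' is asserted rather than proved.

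The paper closes this gap with a specific global device you are missing: it places \emph{two} mirrored variable blocks at the two ends of a bottom row of length $b$, fills the top row with particles, and distinguishes two \emph{red} particles whose distance is exactly the diameter $\diam$; the budget is set to $\ell=\nicefrac{1}{2}(\diam+b)$. Because the red particles must merge within $\ell$ steps and each command moves a particle by at most one pixel, they cannot merge inside a variable block, must move symmetrically through the two blocks for the first $\ell-b$ commands, and can only meet at an end of the bottom row; consequently the \emph{only} remaining freedom is the left/right choice in each variable row, which yields a well-defined assignment. The clause gadgets' arms are aligned with the variable rows and their exit arms sit at a different height, so an unsatisfied clause's particle ends one pixel above the bottom row after $\ell-b$ steps and cannot merge in the remaining $b$ commands. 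To repair your proof you would need either to adopt such a distance-versus-budget forcing argument or to design and verify gadgets with an equivalent synchronization property; as written, the reduction is a plausible plan but not yet a proof.
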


\begin{proof}
	Clearly, a proposed sequence can be checked in polynomial time; thus, the problem is contained in \NP. For showing \NP-hardness, we reduce from \textsc{3Sat}. 
	We refer to~\cref{fig:hardness} for an illustration of the proof.
	
	\begin{figure}[ht]
		\centering
		\includegraphics{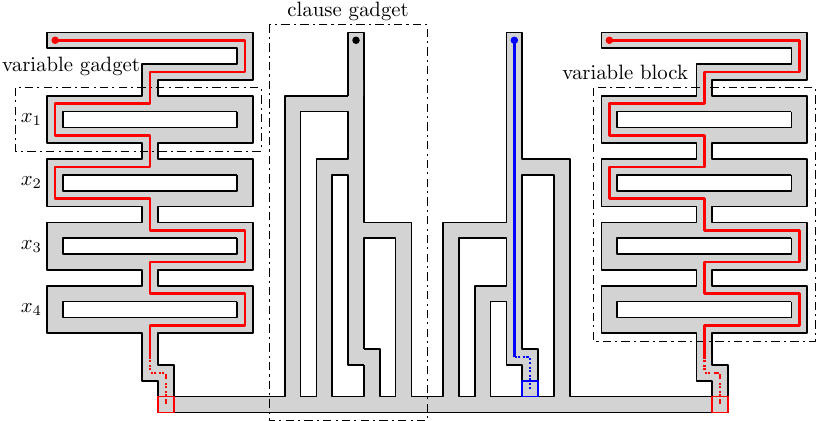}
		\caption{ 	
			Schematic overview of the \NP-hardness reduction. The depicted instance derived from the Boolean \textsc{3Sat} formula $\varphi=(x_1\lor x_2 \lor \overline{x}_3)\land (\overline{x}_2\lor x_3 \lor x_4)$.
			A~sequence that merges the two red particles with $\nicefrac{1}{2}(\diam + b)$ commands corresponds to a variable assignment for $\varphi$.}
		\label{fig:hardness}
	\end{figure}

	For every instance~$\varphi$ of \textsc{3Sat}, we construct a thin polyomino $P_\varphi$ as follows: 
	For~every variable, we insert a variable gadget.
	We join all variable gadgets vertically in row to a \emph{variable block}; we call the top row of each variable gadget its \emph{variable row}.
	For every clause, we construct a clause gadget that contains a left (right) arm for each incident positive (negative) literal in the corresponding variable row and an exit arm in the bottom.
	To obtain~$P_\varphi$, we join all clause gadgets from left to right by a \emph{bottom row} and insert a variable block at the left and right end of this row.
	
	Let $I$ be the instance of~\decProb consisting of~$P_\varphi$ where the top row is filled with particles. We call the two leftmost particles above the variable blocks, the \emph{red} particles and denote the length of the bottom row by $b$. Note that the distance between the red particles is the diameter $\diam$.
	
	\begin{claim*}
		$I$ has a gathering sequence of length $\ell:=\nicefrac{1}{2}(\diam + b)$ if and only if $\varphi$ is satisfiable.
	\end{claim*}
	
	If $\varphi$ is satisfiable, consider a satisfying assignment of $\varphi$ and apply the command sequence that moves the left red particle to the left pixel of the bottom row such that it moves left in the variable row of $x_i$ if $x_i$ is true and right, otherwise. Note that each particle of a clause gadget uses an arm of a variable satisfying the clause and thus ends in the bottom row of~$P_\varphi$.
	Then the command sequence $\langle\Left\rangle^b$ merges all particles. This gathering sequence has length $\ell$.
	
	Now we consider the case that $\varphi$ is not satisfiable.
	We show that in a gathering sequence of length $\ell$, the red particles must merge in the bottom row: They do not merge in one of the variable blocks, otherwise the distance of one particle to the merge location exceeds $\ell$. Moreover, the two particles move symmetrically through the variable blocks for (at least) the first $\ell-b$ commands. Reaching the bottom row of length $b$ on opposite ends, they can only merge within $b$ steps at the left or right end of the bottom row.
	
	Consequently, the gathering sequence is determined by merging the two red
	particles; only the choice of going left or right in each variable gadget is to be determined, yielding a one-to-one correspondence to each variable
	assignment.  Because $\varphi$ is unsatisfiable, in every variable assignment
	there exists a clause that is not satisfied.
	Consider the top particle of this clause, which we call the \emph{blue particle}.
	When traversing the clause gadget, the blue particle does not use any variable arm while the red particle traverses the variable block; because the exit arms of the variable blocks and clause gadgets are on different heights, the blue particle ends one pixel before the bottom row after $\ell - b$ steps. Thus, the red and the blue particles cannot be merged by $b$ commands.
\end{proof}

In fact, a stronger statement holds true: 
As the leftmost pixel of the bottom row is one of only two possible merge locations for a gathering sequence of length $\nicefrac{1}{2}(\diam + b)$, the same reduction shows that the problem remains \NP-complete, even if target locations are~prescribed.

\begin{corollary}
	\decProb is \NP-complete, even with prescribed targets.
\end{corollary}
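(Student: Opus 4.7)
The plan is to reuse the same construction from the proof of \cref{thm:Hard} essentially verbatim, augmented by a prescribed target. Specifically, I would take the instance $(P_\varphi, \text{particles}, \ell)$ with $\ell = \nicefrac{1}{2}(\diam + b)$ from the reduction, and additionally prescribe the target location $t$ to be the leftmost pixel of the bottom row of $P_\varphi$. The task is then to show that $\varphi$ is satisfiable if and only if there is a gathering sequence of length at most $\ell$ that merges all particles at $t$.

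For the forward direction, I would simply reread the ``if'' part of the proof of \cref{thm:Hard}: the command sequence constructed from a satisfying assignment already moves the left red particle to the leftmost pixel of the bottom row and then applies $\langle\Left\rangle^b$, so all particles terminate at $t$. Hence that very sequence is a valid prescribed-target gathering sequence of length $\ell$.

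For the backward direction, I would observe that a gathering sequence at a prescribed target is in particular a gathering sequence, so \cref{thm:Hard} directly implies that $\varphi$ must be satisfiable whenever such a sequence of length $\ell$ exists. Membership in \NP\ is immediate since a candidate sequence can be simulated in polynomial time and its final merge cell compared to $t$.

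The only subtle point — and the one I expect to be the main thing to spell out cleanly — is the remark preceding the corollary, namely that the merge cell for a gathering sequence of length $\nicefrac{1}{2}(\diam + b)$ is forced to lie at one of only two positions (the left or right end of the bottom row). This is already implicitly established in the proof of \cref{thm:Hard}: the argument that the red particles must travel symmetrically through the variable blocks for $\ell - b$ steps and then traverse the bottom row of length $b$ leaves no other reachable common cell. Since one of those two cells is precisely the $t$ we prescribed, restricting to prescribed targets does not discard any satisfying assignments, and the reduction goes through unchanged.
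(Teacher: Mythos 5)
Your proposal matches the paper's argument: reuse the reduction from \cref{thm:Hard} with the leftmost pixel of the bottom row as the prescribed target, noting that the satisfying-assignment sequence already merges all particles there, while any prescribed-target gathering sequence is in particular a gathering sequence, so unsatisfiability still rules it out. This is exactly how the paper justifies the corollary, so the proof is correct and essentially identical in approach.
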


Furthermore, we obtain another result from the very same reduction. 
Consider an instance of~\decProb where all pixels are filled with particles; this instance has a gathering sequence of length $\nicefrac{1}{2}(\diam + b)$ if and only if $\varphi$ is satisfiable.
This fact implies \NP-completeness of the following problem.

\pagebreak
\noindent\textsc{\underline{Tilt Robot Localization}}

\smallskip
\noindent\textsc{Given:} A polyomino $P$, a sensorless robot $r$ within $P$, and an integer $\ell$.\\
\noindent\textsc{Task:} Decide whether there exists a moving sequence such that we know the robot's location after at most $\ell$ steps.
\medskip

In an instance of this problem, we are given a sensorless robot $r$ within a polyomino, and we wonder whether there exists a command sequence of length~$\ell$ such that we know the position of $r$ afterwards, regardless of the initial position of $r$.
The above observations yield:

\begin{corollary}\label{cor:np_hard_2}
	\textsc{Tilt Robot Localization} is \NP-complete.
\end{corollary}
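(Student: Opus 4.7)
The plan is to reuse the polyomino $P_\varphi$ from the proof of Theorem~\ref{thm:Hard} essentially as a black box, together with the remark the paper has just made: the all-pixels-filled configuration of $P_\varphi$ admits a gathering sequence of length $\ell:=\nicefrac{1}{2}(\diam+b)$ if and only if $\varphi$ is satisfiable. The core observation that ties this to localization is a simple semantic equivalence: a command sequence $C$ localizes a sensorless robot in a polyomino $P$ exactly when, starting with one particle in every pixel of $P$, the sequence $C$ merges them all into a single pixel. Indeed, the motion of a particle depends only on its current pixel and the obstacles, so the set of possible positions of the robot after applying $C$ coincides with the image of $P$ under $C$, and knowing the robot's final position means exactly that this image has size one.

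Given this equivalence, the reduction is immediate. From a 3-\textsc{Sat} formula $\varphi$ I would output the instance $(P_\varphi, r, \ell)$ of \textsc{Tilt Robot Localization}, where $r$ is a sensorless robot in $P_\varphi$ and $\ell:=\nicefrac{1}{2}(\diam+b)$. By the equivalence above, a localizing command sequence of length at most $\ell$ exists if and only if the all-pixels-filled configuration of $P_\varphi$ admits a gathering sequence of that length, which by the discussion preceding the corollary is the case if and only if $\varphi$ is satisfiable. The size of $P_\varphi$ is polynomial in $|\varphi|$, so the reduction is polynomial.

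For membership in \NP, I would guess a command sequence of length $\ell$ and verify it by tracking the set $S_i\subseteq P$ of positions still possible for the robot after the first $i$ commands, with $S_0=P$. Each update $S_i\mapsto S_{i+1}$ takes time $O(|P|)$, and the sequence certifies localization precisely when $|S_\ell|=1$. Hence verification is polynomial.

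I do not anticipate any real obstacle; the only point needing explicit care is the localization-equals-full-gathering equivalence and the observation that it preserves the length of the command sequence, so the budget $\ell$ transfers verbatim between the two problems. Everything else is inherited directly from Theorem~\ref{thm:Hard}.
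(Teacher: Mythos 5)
Your proposal is correct and follows essentially the same route as the paper: the paper likewise observes that the all-pixels-filled instance of $P_\varphi$ gathers in $\nicefrac{1}{2}(\diam+b)$ steps iff $\varphi$ is satisfiable, and identifies localizing a sensorless robot with gathering particles placed in every pixel. Your explicit statement of that localization-equals-full-gathering equivalence and the set-tracking \NP-verification merely spell out what the paper leaves implicit.
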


We note that the general problem of localizing a robot in a known polygonal domain obtained much research in the past~\cite{DudekRW98}.


\section{Algorithmic approaches}\label{sec:gathering:algorithms}
After establishing \NP-completeness of the problem in general, we now focus on strategies that eventually merge particles, i.e., reduces the number of particles in the workspace. 

We start with \emph{simple} polyominoes, i.e., a polyominoes $P$ for which decomposing~$P$ by horizontal lines through pixel edges results in a set of rectangles $\mathcal R$ such that the edge-contact graph $\mathcal{C}(\mathcal{R})$ of $\mathcal R$ is a tree. 
The edge-contact graph of a set of rectangles in the plane contains a vertex for each rectangle and an edge for each side contact; a corner contact does not result in an edge.
A \emph{hole} of a polyomino~$P$ here refers to a maximal set of blocked cells (cells not contained in $P$) that are connected such that there exists a closed walk within $P$ surrounding it.
As usual, simplicity of a polyomino captures the feature of not containing holes.
A \emph{shortest path from a pixel $p$ in $P$ to a rectangle~$R$ in $\mathcal R$} is a shortest path from $p$ to a pixel $q$ in $R$ such that $\dist(p,q)$ is minimal.

\begin{theorem}\label{thm:simpleTwo}
	For any two particles in a simple polyomino~$P$, there exists a gathering sequence of length~$\diam$.
\end{theorem}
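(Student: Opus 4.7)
The plan is to exploit the tree structure implicit in the horizontal rectangle decomposition of the simple polyomino $P$. Since $P$ has no holes, the edge-contact graph $T$ of this decomposition is a tree. Let $R_1, R_2$ be the rectangles containing the two particles $\pi_1, \pi_2$, and let the unique path between $R_1$ and $R_2$ in $T$ be $R_1 = S_0, S_1, \ldots, S_m = R_2$.

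I would first identify a suitable \emph{meeting rectangle} $R^* = S_j$ along this path, for instance the one whose cells lie lowest in the plane, so that both particles can eventually be funneled into $R^*$ from above via Down commands. Using the notion of shortest paths from a pixel to a rectangle, I would compute such shortest paths from $\pi_1$ and $\pi_2$ to $R^*$; each naturally decomposes into a horizontal portion that aligns the particle with a column draining into $R^*$ and a vertical portion that performs the descent.

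The command sequence would then be built in two phases: Phase~1 interleaves horizontal and Down commands that drive each particle along its shortest path into $R^*$, and Phase~2 executes a final horizontal sweep within $R^*$ to merge the two particles at one end. To bound the length by $\diam$, I would argue that the vertical portions traverse at most the vertical extent of the tree path and the horizontal portions cover at most the offsets of $\pi_1$ and $\pi_2$ from $R^*$ plus the width of $R^*$, so the total is bounded by the length of the shortest path from $\pi_1$ to $\pi_2$ routed through $R^*$, which is at most $\diam$ by definition.

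The main obstacle is coordinating horizontal commands between the two particles, since any command is applied to both simultaneously. The tree structure provides a unique descent channel for each particle into $R^*$, so the required alignments are predetermined, but it remains to show that they are mutually compatible and can be realized in parallel. The critical invariant is that in a simple polyomino a horizontal command either progresses a particle toward its target column or is harmlessly blocked by a boundary wall, and an analogous monotonicity holds for Down commands; maintaining these invariants for both particles simultaneously, while keeping the overall command count within the $\diam$ budget, is the heart of the technical argument.
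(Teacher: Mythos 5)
Your setup (horizontal rectangle decomposition, tree-structured contact graph) matches the paper's, but the strategy you build on it has two genuine gaps. First, the choice of a fixed meeting rectangle $R^*$ that both particles reach ``from above via Down commands'' fails already for an inverted-U-shaped polyomino with the two particles at the bottoms of the two legs: the tree path between their rectangles goes up, across, and down, so no rectangle on the path is reachable from both sides by descending, and one particle must move up before it can move down. More importantly, the ``critical invariant'' you state --- that a horizontal (or Down) command either advances a particle toward its target column or leaves it harmlessly blocked --- is false, and it is precisely the point where the whole argument lives or dies. When you issue \Left to align $\pi_1$ with its descent channel, $\pi_2$ is in general not blocked and is pushed away from \emph{its} target column; the extra commands needed to undo such detours are not accounted for, so the claimed bound $\dist(\pi_1,R^*)+\dist(\pi_2,R^*)\leq \diam$ on the total number of commands does not follow. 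You correctly identify this coordination problem as ``the heart of the technical argument,'' but you do not resolve it.

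The paper avoids the coordination problem entirely by making the strategy asymmetric: only one particle $\pi$ is steered, always toward the rectangle currently containing the other particle $\pi'$ (via the next rectangle on the tree path). The key structural fact, which uses simplicity, is that the involuntary motion of $\pi'$ can only carry it further into the subtree of the contact graph lying ``beyond'' $\pi$'s current rectangle; hence the eventual merge location always lies ahead of $\pi$ along a shortest path, every single command strictly decreases $\pi$'s distance to that merge location, and the total length is at most $\diam$. If you want to salvage a symmetric meeting-point argument, you would need to prove an analogue of this monotonicity for both particles simultaneously, which the tree structure does not obviously provide.
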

\begin{proof}
	Let $\mathcal{R}$ be a decomposition of $P$ into rectangles by cutting $P$ with horizontal lines through pixel edges. Then, because $P$ is simple, the edge-contact graph $\mathcal{C}(\mathcal{R})$ of the rectangles $\mathcal{R}$ is a tree. For an example, consider \cref{fig:Simple}.
	\begin{figure}[htb]
		\centering
		\includegraphics{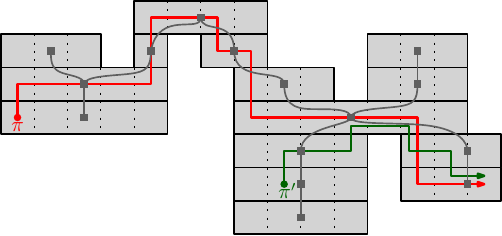}
		\caption{A simple polyomino $P$, and its edge-contact graph $\mathcal{C}(\mathcal{R})$ (in gray). When the red particle $\pi$ moves towards the green particle $\pi'$, $\pi$ and $\pi'$ follow the respective red and green paths. The dotted lines separate the pixels.}
		\label{fig:Simple}
	\end{figure}

	For every $t$, let $R_t$ and $R'_t$ be the rectangles of $P$ containing the two particles $\pi$ and $\pi'$ after applying $t$ commands, respectively. Moreover, let $S_t$ be a shortest path from $R_t$ to $R'_t$ in~$\mathcal{C}(\mathcal{R})$; and let $S_t(1)$ be the successor of~$R_t$ on $S_t$ (if it exists, i.e., $R_t\neq R'_t$).
	\pagebreak
	
	We use the following strategy:
	\begin{description}
	\item[Phase~1:] While $R_t\neq R'_t$, compute a shortest path $S_t$ from $R_{t}$ to $R'_{t}$ in $\mathcal C(\mathcal R)$. 
	Move $\pi$ to $S_t(1)$ via a shortest path in~$P$.
	Update $R_t$ and $ R'_t$. 
	
	\item[Phase~2:] If $R_t=R'_{t}$, merge $\pi$ and $\pi'$ by moving $\pi$ towards~$\pi'$ by a shortest (horizontal) path; note that this sequence merges the particles within $R_t$.
	\end{description}
	We now show that this strategy yields a gathering sequence of length~$D$.
In fact, the resulting sequence has the following property.

	\begin{claim*}\label{clm:1}
		For every $s>t$, the rectangles $R_s$ and $R'_s$ are either equal to $R_t$ or lie in the connected component~$C$ of $\mathcal C(\mathcal{R}\setminus R_t)$ containing $R'_t$.
	\end{claim*}
To arrive at a contradiction, assume that $\pi$ enters a rectangle~$R_s$ ($\neq R_t$) that is not in~$C$.
	Because $\pi$ moves towards~$\pi'$ in every step, there also exists $p>t$ such that $R'_p$ ($\neq R_t$) is not contained in $C$.
	Because $P$ is simple, there exists an $j$ with $p> j\geq t$ such that $R_j=R'_{j}=R_t$. However, because $R_{j}=R'_{j}$, $\pi$ and $\pi'$ merge in $R_j$, i.e., $R_s=R_t$ and $R_p=R_t$; a contradiction.
	
	The claim implies that $\pi$ never re-enters a rectangle, i.e., it moves to an unseen rectangle in every step of Phase~1. Because~$P$ is finite, $\pi$ and $\pi'$ eventually meet in some rectangle which ends Phase~1 and therefore merge in Phase~2. 
	
	Moreover, for every $t$, the merge location and $R_t'$ lie in $C$ or are equal to $R_t$. Consequently, in every step, $\pi$ moves towards the merge location on a shortest path.
	Because a shortest path is at most of length $\diam$, the length of the gathering sequence is bounded by $\diam$.
This implies that the merge location and $R_t'$ lie in~$C$ or are equal to $R_t$. Consequently,
in every step, $\pi$ moves towards the merge location on a shortest path and thus that the gathering sequence is at most of length $D$.
\end{proof}

In the remainder, we refer to the strategy used to prove \cref{thm:simpleTwo} as \emph{Dynamic-Shortest-Path} (abbrv. DSP):
Move one particle towards the other along a shortest path; update the shortest path if a shorter one exists.
The example depicted in \cref{fig:ex1} shows that DSP may perform significantly worse in non-simple polyominoes.
	
\begin{proposition} 
		In polyominoes with holes, the strategy DSP may not yield a gathering sequence of length $\mathcal{O}(\diam)$. 
	\end{proposition}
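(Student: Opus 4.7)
The plan is to exhibit a concrete family of non-simple polyominoes on which DSP requires asymptotically more than $\diam$ commands, consistent with the instance depicted in \cref{fig:ex1}. The high-level intuition is structural: the proof of \cref{thm:simpleTwo} crucially uses that the edge-contact graph $\mathcal{C}(\mathcal{R})$ of the rectangle decomposition is a tree, so the route from one particle's rectangle to the other is unique and the shortest-path-in-the-tree distance only decreases along the strategy. Introducing a hole creates a cycle in $\mathcal{C}(\mathcal{R})$, so there are now two competing routes around the hole and the one chosen by DSP can switch after a single command. Because each tilt is global, the ``inactive'' particle $\pi'$ also moves in the same direction, and in a ring-like polyomino this motion is clockwise for one particle and counterclockwise relative to the hole for the other, which can flip which of the two routes is shorter.

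Concretely, I would take a ring-shaped polyomino $P_n$ of width one around a rectangular hole of side $\Theta(n)$, so that $\diam(P_n)=\Theta(n)$, and place the two particles on (approximately) diametrically opposite cells of the ring. I would then trace DSP command by command, tracking both particles' positions, and argue inductively that after each command the two routes around the hole swap roles as the current shortest path, so DSP issues a tilt in the opposite direction next. Because every command moves both particles in lockstep (except when one is stopped by a corner of the ring), the net progress per ``back-and-forth'' round is at most a small additive constant, while DSP is forced to expend $\Omega(n)$ such rounds. Hence the total length of the gathering sequence produced is $\omega(\diam)$, and with the ring construction one can in fact push this to $\Omega(\diam^2)$, matching the gap between \cref{thm:simpleTwo} and the general $\mathcal{O}(k\diam^2)$ bound promised in the introduction.

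The main obstacle is controlling the global nature of tilts so that the oscillation is genuinely persistent. One has to shape the corners of the ring carefully so that, in each round, exactly one of the two particles is temporarily blocked while the other advances; otherwise, symmetric motion would preserve (rather than flip) the route lengths, or the particles could merge prematurely at a corner. Once the polyomino is chosen so that the blocking pattern alternates correctly along the ring, the step-by-step trace becomes a routine bookkeeping exercise, and summing over all rounds produces the claimed super-linear lower bound, which contradicts any $\mathcal{O}(\diam)$ guarantee for DSP in the presence of holes.
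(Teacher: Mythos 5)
Your plan has a genuine gap: the construction, which is the entire substance of this proposition, is never actually given, and the intuition driving it is wrong for the example you propose. In a width-one ring around a single rectangular hole with the two particles placed diametrically opposite, DSP does \emph{not} oscillate: when the active particle moves along its chosen route, the passive particle moves in lockstep along the parallel side of the ring, so the chosen route shortens by two per command while the alternative route lengthens by two, and no swap of shortest routes ever occurs. Once the passive particle is pinned at a corner, the active particle simply catches up. A short calculation (outer side length $n$, particles at the midpoints of the bottom and top rows) gives a gathering sequence of roughly $\nicefrac{3}{2}\,n$ commands against a diameter of roughly $2n$, i.e., DSP is \emph{better} than $\diam$ here. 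You concede that one must ``shape the corners carefully'' so that exactly one particle is blocked in each round, but you never exhibit such a shape, and it is precisely this construction that a correct proof must supply; asserting that the subsequent trace is ``routine bookkeeping'' does not discharge it. Your further claim that a single ring can be pushed to $\Omega(\diam^2)$ is unsupported and implausible for the same reason.

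For comparison, the paper's proof does not use a ring and does not rely on oscillation at all. It builds a polyomino with $H$ holes (see \cref{fig:ex1}), each of width $w$ and height $h$, arranged so that, by symmetry, the distance between the two particles cannot decrease until one of them reaches the left or right side of the polyomino; consequently DSP is forced to drag the active particle through the entire winding bottom corridor, of length at least $H(6h+w)+3$, whereas the diameter is only about $Hw+6h+4$ because the diameter is realized along much more direct routes. Choosing $h=\nicefrac{cw}{6}$ with $c\geq H$ makes the ratio of sequence length to $\diam$ at least $\nicefrac{H}{2}$, which is unbounded as $H$ grows. Note that the unbounded ratio comes from \emph{many} holes creating a long detour relative to the diameter, not from a single hole inducing back-and-forth behavior; if you want to salvage your approach, you would need to either adopt such a multi-hole detour construction or give an explicit corner geometry together with a step-by-step verification that the route swap really persists for $\Omega(\diam)$ rounds.
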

\begin{proof}
	Consider a polyomino with \( H \) holes, each of width \( w \) and height \( h \), as depicted in \cref{fig:ex1} for \( H = 4 \). Place two particles, \( \pi \) and \( \pi' \), as illustrated. 
	By the symmetry of $P$, the distance between
	 $\pi$ and $\pi'$
	  decreases for the first time when one of them is at the left or right side of $P$. 
	  Therefore, when the red particle \( \pi \) moves towards the green particle \( \pi' \) by a Dynamic Shortest Path (DSP), they traverse the entire bottom path.
	  Consequently, we obtain a lower bound of \( H(6h + w) + 3 \) for the gathering sequence \( C \).
	Note that the diameter is bounded by $\diam\leq (H-2)w+6h+2w+4=Hw+6h+4$. 
	\begin{figure}[ht]
		\centering
		\includegraphics{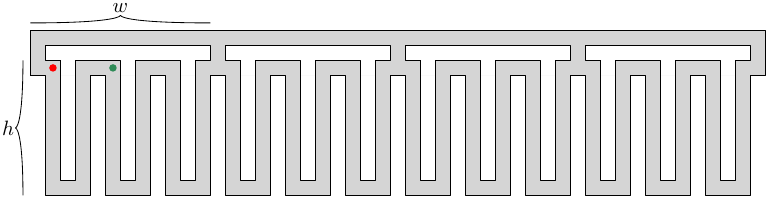}
		\caption{When the red particle $\pi$ moves towards the green particle $\pi'$ by shortest paths, $\pi$ visits the entire bottom path.}
		\label{fig:ex1}
	\end{figure}

	By choosing the height $h:=\nicefrac{cw}{6}$ for some constant $c\geq H$, the ratio of $|C|$ and $|D|$ becomes arbitrarily large: 
	\begin{displaymath}
		\frac{|C|}{|D|}\geq \frac{Hcw+Hw+3}{Hw+cw+4}\geq \frac{H(c+1)}{H+c+1}\geq\frac{H}{2}.\qedhere
	\end{displaymath}
\end{proof}

Nevertheless, DSP always merges two particles, as we show in the following.

\begin{proposition}
	For every polyomino $P$ with $n$ pixels and diameter $D$ and every configuration with two particles, DSP yields a gathering sequence of length $\mathcal{O}(nD)$.
\end{proposition}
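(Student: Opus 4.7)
My plan is to control the distance $d_t := \dist(\pi_t, \pi'_t)$ and exploit the fact that a single command moves both particles simultaneously. I would first observe that $\pi$ moves in every step: since DSP applies the first direction $c$ of the shortest path from $\pi_t$ to $\pi'_t$, this direction is by definition unblocked for $\pi$, so $\pi_{t+1} = \pi_t + c$ and $\dist(\pi_{t+1}, \pi'_t) = d_t - 1$. If $\pi'$ is blocked by $c$, then $\pi'_{t+1} = \pi'_t$ and $d_{t+1} = d_t - 1$; otherwise $\pi'$ also moves by $c$ and the triangle inequality gives $d_{t+1} \in \{d_t - 2, d_t - 1, d_t\}$. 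In particular, $d_t$ is monotonically non-increasing, and it strictly drops whenever $\pi'$ is blocked.

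The crucial structural observation is that whenever $d_{t+1} = d_t$, neither particle is blocked, both translate by the same vector $c$, and therefore the Euclidean offset $\Delta_t := \pi_t - \pi'_t$ is preserved. I would then partition the command sequence into \emph{phases}, where the $d$-phase is the (possibly empty) maximal contiguous block of time steps with $d_t = d$; since $d_t$ is non-increasing and takes values in $\{0, 1, \ldots, D\}$, there are at most $D+1$ non-empty phases. Within a single phase at distance $d$, $\Delta$ is constant, so the state $(\pi_t, \pi'_t) = (\pi_t, \pi_t - \Delta)$ is determined by $\pi_t$ alone; since DSP is deterministic, any repetition of $\pi_t$ within the phase would lock the trajectory into an infinite cycle at distance $d > 0$. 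Hence the $\pi_t$-values within a phase are pairwise distinct, the phase has length at most $n$, and summing over the at most $D+1$ phases yields a total length of at most $(D+1)n = \mathcal{O}(nD)$.

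The main obstacle I expect is ensuring that DSP actually terminates, so that the phase argument is not merely conditional. Because the state space has only $n^2$ configurations, DSP either terminates or enters a cycle; by the previous paragraph any such cycle sits inside a single constant-distance phase with constant $\Delta$, meaning both $\pi$ and $\pi'$ would have to traverse parallel closed walks in $P$ driven by the same command sequence without $\pi'$ ever being blocked. I would rule this out by a structural argument on such a parallel pair: since the walk of $\pi$ closes up, its commands sum to zero, and considering an extremal pixel of the walk (e.g.\ one of maximum $y$-coordinate) forces the shortest-path direction at that pixel to point inward, which after translation by $-\Delta$ eventually meets a boundary cell that blocks $\pi'$, contradicting the constancy of $\Delta$. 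Once this is established, the phase bound delivers the claimed $\mathcal{O}(nD)$ sequence length.
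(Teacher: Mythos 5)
Your setup is partly sound: under your reading of DSP (recompute a shortest path to the \emph{current} position of $\pi'$ at every step), the triangle-inequality argument correctly shows $d_t$ is non-increasing and drops whenever $\pi'$ is blocked, and the phase bookkeeping (at most $D+1$ phases, constant offset within a phase, state determined by $\pi_t$ alone) would indeed give the bound $(D+1)n$ \emph{if} you could rule out an infinite cycle. But that last step is the entire mathematical content of the proposition, and your argument for it does not work. Inside a hypothesized cycle, the walk of $\pi'$ is by assumption a translate of the closed walk of $\pi$ lying wholly inside $P$ with no blocked step, and the commands over one period sum to zero, so there is no persistent drift that could ``eventually meet a boundary cell''. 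The extremal-pixel observation only tells you that at a maximum-$y$ cell of the cycle the command is not ``up''; it does not force anything to point ``inward'' in a way that produces a collision for $\pi'$, and I do not see how your sketch turns into a contradiction. So the crux -- termination of a constant-distance phase -- is left unproven, and your phase bound is, as you yourself note, conditional.

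Note also that you have quietly changed the algorithm: the paper's DSP keeps following a stored shortest path (whose endpoint is a \emph{stale} position of $\pi'$) and only reroutes when a strictly shorter path exists, which within a constant-distance stretch happens only when $\pi$ reaches the end of the stored path. This stored-path behaviour is exactly what the paper's proof exploits to get the missing drift: each time $\pi$ catches up to a stale target, $\pi'$ has been translated by the same fixed non-zero offset vector $v$, so its successive ``catch-up'' positions $p_1, p_1+v, p_1+2v,\dots$ are distinct and must leave $P$ within $n$ repetitions, forcing a collision and hence a distance decrease; each repetition costs at most $D$ commands. In your recompute-every-step dynamics this arithmetic-progression structure disappears ($\pi$ never has to reach any old position of $\pi'$), so the paper's argument does not transfer, and you would need a genuinely new no-cycle proof (or a counterexample ruling your variant out). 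As it stands, the proposal has a real gap at its central step; repairing it either means adopting the stored-path rule and the paper's drift argument, or supplying a rigorous proof that the first-step-of-a-current-shortest-path vector field with fixed offset admits no periodic orbit.
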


\begin{proof}
	Let $\pi$ follow $\pi'$. We show that within $n$ commands, their distance $\Delta$ decreases at least by one.
	Note that if the shortest path must be updated, the distance decreases.
	
	Consider a sequence $C$ of $\ell$ steps, in which $\Delta$ remains constant. In $C$, $\pi'$ has no collision and the shortest path is only updated when $\pi$ reaches the end of the current shortest path, i.e., a previous position of $\pi'$. Let $p_0$ and $p_1$ denote the initial position of $\pi$ and $\pi'$, respectively. When $\pi$ reaches $p_i$, $p_{i+1}$ denotes the position of $\pi' $.
	Let $v$ be the coordinate vector from $p_0$ to $p_1$. Because $\pi$ has no collision, $v$ is the coordinate vector from $p_i$ to $p_{i+1}$ for every $i$. Note that $\pi'$ must have a collision when moved~$n$ times in direction $v$ (because $P$ ends). Moreover, $\pi$ reaches $p_{i+1}$ from $p_i$ after at most~$D$ commands. Consequently, $\ell\leq nD$.
\end{proof}

By using a different strategy, we obtain a better bound:
The strategy \emph{Move-To-Extremum} (abbrv. MTE) iteratively moves an extreme particle (e.g., bottom-leftmost) to an opposite extreme pixel (e.g., top-rightmost) along a shortest path.

\begin{theorem}\label{thm:NonSimpleTwo}
	For any two particles in a polyomino $P$, MTE yields a gathering sequence of length at most $\diam^2$.
\end{theorem}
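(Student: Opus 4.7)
The plan is to decompose the gathering sequence into at most $\diam$ iterations of at most $\diam$ commands each, yielding the claimed bound of $\diam^2$.

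First I would observe that each iteration of MTE applies the commands realizing a shortest path in $P$ from the current extreme particle (say, the bottom-leftmost) to the opposite extreme pixel $p_{TR}$ (say, the top-rightmost). Since any shortest path in $P$ has length at most $\diam$, each iteration contributes at most $\diam$ commands.

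To bound the number of iterations by $\diam$, I would exhibit a monovariant that starts at most $\diam$ and strictly decreases from one iteration to the next (unless the particles have already merged). A natural candidate is $\Phi$, the shortest-path distance in $P$ from the currently non-moved particle to $p_{TR}$. The geometric intuition is that $p_{TR}$ is a corner of the polyomino's bounding box, so a particle located there is blocked against every Up and every Right command; consequently, when a shortest path to $p_{TR}$ consists only of Up and Right moves, the stationed particle stays put while the other is drawn toward it, and the two merge at $p_{TR}$ by the end of the iteration.

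The main obstacle will be making the strict decrease of $\Phi$ rigorous in the presence of holes, where a shortest path to $p_{TR}$ need not be monotone and may contain Left or Down commands to circumnavigate obstacles; each such command can temporarily displace the particle parked at $p_{TR}$. To handle this I would track, command by command, both the moved particle's position on its shortest path and the other particle's response, and use the structural fact that the number of Left/Down commands on a shortest path of length $\ell$ from $q$ to $p_{TR}$ is exactly $(\ell - \mathrm{dist}_{L_1}(q,p_{TR}))/2$. A careful amortized argument, pairing each backward displacement of the stationed particle with forward progress of the moved particle along the same path, should show that $\Phi$ drops by at least $1$ per iteration, giving at most $\diam$ iterations and hence $\diam \cdot \diam = \diam^2$ commands in total.
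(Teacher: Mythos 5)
Your overall skeleton matches the paper's: iterations of length at most $\diam$, each realizing a shortest path to the extreme pixel $q$, plus a potential of size $O(\diam)$ that must drop by at least one per iteration. The gap is that you never actually establish the strict decrease, and the mechanism you sketch cannot deliver it. Pairing ``each backward displacement of the stationed particle with forward progress of the moved particle'' only shows that the combined potential does not \emph{increase}: the moved particle loses exactly one unit of distance to $q$ with every command (it follows a shortest path, hence never collides), while the other particle gains at most one unit per command, so the sum $\dist(\pi,q)+\dist(\pi',q)$ is non-increasing. A strict drop requires the non-moved particle to be \emph{blocked} at least once during the iteration, and your count of Left/Down commands on the path does not give you that; note also that even an Up or Right command can increase the stationed particle's geodesic distance to $q$ once it has been displaced off $q$, since with holes that distance is not monotone in the coordinates. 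The paper closes exactly this step with a short extremality argument that is missing from your proposal: if the non-moved particle never collided, it would undergo the same net translation $q-p_\pi$ as the (collision-free) moved particle; since it starts weakly above or to the right of the bottom-leftmost particle $\pi$, it would end at a pixel of $P$ higher or further right than the top-rightmost pixel $q$ --- a contradiction. That single forced collision, combined with the per-command accounting above, is what makes the potential drop by at least $1$ per iteration.

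A secondary issue: your monovariant $\Phi$ (the distance of the non-moved particle alone) is the wrong quantity in the first iteration, where neither particle sits at $q$ and $\Phi$ can increase. Use the sum of both distances to $q$ (at most $2\diam$ initially), as the paper does; after the first iteration one summand is always $0$, so the two potentials coincide from then on. With this fix the argument yields $O(\diam^2)$ commands; observe that neither your accounting nor the paper's literally attains the constant $1$ in the stated bound $\diam^2$, since the initial potential can be as large as $2\diam$.
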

\begin{proof}
	Let $q$ be the top-rightmost pixel of $P$. To merge the two particles in $q$, our strategy is as follows: 
	Identify the particle $\pi$ that is bottom-leftmost. Apply a command sequence that moves $\pi$ to $q$ on a shortest path. Repeat.
	
	\begin{claim*}
		In each iteration, the sum of the distances $\Delta$ of the two particles to $q$ decreases.
	\end{claim*} 
	Note that $\Delta$ decreases when the other particle $\pi'$ has a collision. If $\pi'$ had no collision, there exist a pixel that is higher or more to the right than $q$, contradicting the choice of $q$.
	Consequently, $\Delta$, which is at most $2D$ at start, decreases at least by~1 for every $D$ steps. Hence, after $\mathcal{O}(D^2)$ steps, $\Delta$ is reduced to~0.
\end{proof}
Note that there exist polyominoes, for example a simple square, where the number of pixels~$n$ is in $\Omega(D^2)$. Therefore, our result significantly improves on the best previously known bound of $\mathcal{O}(n^3)$ that was shown in~\cite{gathering_swarm}.

\smallskip
Finally, we show that a shortest gathering sequence for two particles in a  polyomino with holes may need to exceed $D$.
\begin{proposition}\label{thm:lowerBound}
	Let $P$ be a polyomino with two particles. A shortest gathering sequence may be of length $\nicefrac{3}{2}\,\diam-\mathcal{O}(\sqrt{\diam})$.
\end{proposition}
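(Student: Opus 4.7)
My plan is to exhibit a concrete polyomino $P$ together with two particles whose shortest gathering sequence has length at least $\nicefrac{3}{2}\,\diam - \mathcal{O}(\sqrt{\diam})$. Since \cref{thm:simpleTwo} already gives an upper bound of $\diam$ for simple polyominoes, such a construction must contain at least one hole. I propose a polyomino whose skeleton is a large rectangular annulus with outer dimensions $h \times L$, where $L$ scales linearly with $\diam$ and $h = \Theta(\sqrt{\diam})$ is the transversal thickness responsible for the additive error term. The two particles $\pi$ and $\pi'$ are placed on opposite sides of the long internal hole at horizontally offset, non-mirror-symmetric positions; a direct computation shows that their shortest-path distance in $P$ equals $\diam$, and that $\diam$ is realised as the diameter of $P$.

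To establish the lower bound I would fix an arbitrary gathering sequence $C$ merging $\pi$ and $\pi'$ at some cell $m$, and bound $|C|$ from below by tracking how the realised trajectories $\gamma_\pi,\gamma_{\pi'}$ evolve relative to $m$. The key structural observation is that, because $\pi$ and $\pi'$ sit on opposite long sides of the hole, a single command can only move \emph{both} particles closer to $m$ while both lie within one of the two short end-corridors (of length $\mathcal{O}(h)=\mathcal{O}(\sqrt\diam)$); outside those zones, each command advances at most one of the two particles towards $m$. This suggests a potential-function argument in which $\Phi = \dist(\pi,m)+\dist(\pi',m)$ starts at $\Phi_0 \ge \diam$ and decreases by at most $1$ per command in all but $\mathcal{O}(\sqrt\diam)$ of the steps; combined with the complementary bound that \emph{no} command decreases $\Phi$ by more than $2$, this yields $|C|\ge\tfrac{3}{2}\diam - \mathcal{O}(\sqrt\diam)$ after summing contributions from the two regimes.

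The main obstacle will be justifying that the ``cheap'' regime—where $\Phi$ could drop by $2$—is truly confined to $\mathcal{O}(\sqrt\diam)$ commands, since a sufficiently clever command sequence could potentially shuttle both particles in and out of the short end-corridors multiple times. Handling this requires a careful case analysis, distinguishing which side of the hole each particle currently occupies and exploiting the asymmetric placement to rule out any global strategy that spends a constant fraction of its commands in the cheap regime. A secondary (and more routine) obstacle is verifying that the diameter of $P$ is indeed $\diam$ and that the particles are at mutual distance $\diam$; this boils down to a finite number of distance computations along the annular border.
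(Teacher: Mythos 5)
Your construction and your counting argument both break down, and the combination cannot be repaired within the framework you set up. First, the arithmetic: even granting your structural claim, a potential $\Phi=\dist(\pi,m)+\dist(\pi',m)$ that starts at $\Phi_0\ge\diam$, drops by at most $2$ per command during at most $\mathcal{O}(\sqrt{\diam})$ ``cheap'' steps, and by at most $1$ otherwise, only forces $|C|\ge \Phi_0-\mathcal{O}(\sqrt{\diam})\ge\diam-\mathcal{O}(\sqrt{\diam})$. To reach $\nicefrac{3}{2}\,\diam$ you would need $\Phi_0\ge\nicefrac{3}{2}\,\diam$ for \emph{every} admissible merge cell $m$, and this fails in your annulus: the workspace is essentially a thickened cycle of length about $2\diam$, your two particles are antipodal on it, and then $\dist(\pi,m)+\dist(\pi',m)\approx\diam$ for every cell $m$, up to $\mathcal{O}(\sqrt{\diam})$ corrections from the corridor width. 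Second, the structural claim itself is false: if the merge cell lies near one end of the annulus, both particles must travel in the same horizontal direction, so a single \Left (or \Right) command moves \emph{both} of them closer to $m$ even though neither is in an end corridor. Indeed, two particles in a rectangular annulus can always be gathered with roughly $\diam$ commands (apply \Left until both are pinned against the left end, which places them in the same leftmost corridor, then apply \Down until they meet), so no instance of your shape can certify a lower bound exceeding $\diam$, let alone $\nicefrac{3}{2}\,\diam-\mathcal{O}(\sqrt{\diam})$.

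The mechanism the paper uses to gain the extra $\nicefrac{\diam}{2}$ is different in kind. It builds a comb-like polyomino from $\Theta(\sqrt{\diam})$-sized chimney gadgets repeated periodically along a base row, and places the two particles so that, by the symmetry of the construction and of the placement, the inter-particle distance \emph{cannot decrease at all} during the first roughly $\nicefrac{\diam}{2}$ commands: the particles respond identically to every command until one of them reaches the left or right end of $P$ (the ``west-touch'' configuration). At that moment the best possible merge location is the top of the extreme chimney, and the other particle is still at distance $\diam-\mathcal{O}(\sqrt{\diam})$ from it, so every gathering sequence needs at least $\nicefrac{\diam}{2}+\diam-\mathcal{O}(\sqrt{\diam})$ commands. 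So the key ingredient you are missing is a ``wasted prefix'' forced by symmetry (lockstep motion during which no progress on the inter-particle distance is possible), rather than a per-command bound on how much progress towards a fixed merge cell can be made; your one overlapping ingredient is the choice of gadget scale $h=\Theta(\sqrt{\diam})$, which is also what produces the $\mathcal{O}(\sqrt{\diam})$ error term in the paper's proof.
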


\begin{proof}
	Let $h\in \mathbb{N}$.
	Consider the polyomino $P$ illustrated in \cref{fig:lowerbound-non-simple} which consists of the bottom row and $S$ many chimneys of height~$h$ and length $2h+4$. Because $P$ consists of $(2h+4)S+6S$ pixels, it has a diameter of $\diam = (h+5)S$. We set $S=2h+4$.
	The two particles $\pi_1$ and $\pi_2$ have an initial distance of $\diam$ such that the number of chimneys to the left of $\pi_1$ and to the right of $\pi_2$ is $\nicefrac{1}{2}\,(h-1)$.

	\begin{figure}[htb]
		\centering
		\includegraphics{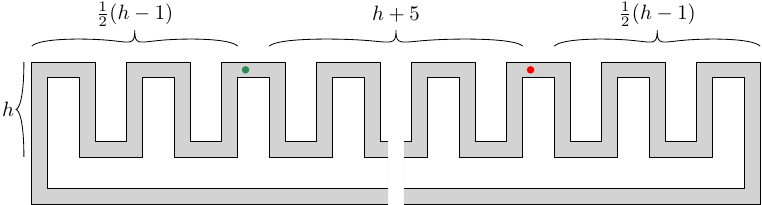}
		\caption{A polyomino consisting of a base and $S$ many chimneys.}
		\label{fig:lowerbound-non-simple}
	\end{figure}
	Due to the symmetry of $P$ and the placement of the particles, the distance $\Delta$ of~$\pi_1$ and~$\pi_2$ cannot decrease within the first 
	$\nicefrac{1}{2}\,(h-1)(2h+4)-1= \nicefrac{\diam}{2} -\mathcal{O}(\sqrt{D})$ commands.
	Without loss of generality, we assume that $\pi_1$ is at the left side of $P$ right before $\Delta$ decreases for the first time.
	We call this configuration \emph{west-touch}.
	Starting from the west-touch configuration, the best merge location is in the top of the leftmost chimney. 
	
	Hence, the distance of $\pi_2$ to the merge location is $\diam - h =\diam -\mathcal{O}(\sqrt{D})$. 
	Consequently, a~gathering sequence is at least of length $\nicefrac{\diam}{2} +D-\mathcal{O}(\sqrt{D})$. 
\end{proof}

\subsection{Reducing the number of particles significantly}\label{sec:tiltgather:cornering}

Now we show how to significantly decrease the number of particles by few commands to a parameter proportional to the complexity of the polyomino, namely the number of convex corners. 
This is particularly relevant for establishing \emph{oblivious} gathering strategies that are capable of merging all particles efficiently, even if their initial configuration is not known. 
This is of interest in practical applications as it may be costly to determining or knowing the location of each individual particle beforehand. 
These settings are equivalent to the situation where each cell contains a particle, and thus, \textsc{Tilt Robot Localization}, see~\cref{sec:gathering-hard}.

\begin{lemma}\label{lem:ReduceParticles}
	Let $P$ be a polyomino with diameter $D$ and $k$ convex corners.
	For every configuration $A\in\Part$, there exists a command sequence of length $2\diam$ which transforms $A$ to a configuration $A'\in\Part$ such that $|A'|\leq \nicefrac{k}{4}$.
\end{lemma}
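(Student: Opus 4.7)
The plan is to exhibit four symmetric command sequences of length $2\diam$, each of which drives every particle into a convex corner of a prescribed orientation, and then to pick the sequence whose orientation corresponds to the rarest of the four convex corner types.

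Classify each convex corner of $P$ by which two of its neighbors lie outside $P$: a pixel $(x,y)\in P$ is a \emph{NE convex corner} if $(x+1,y),(x,y+1)\notin P$, and NW, SE, SW corners are defined analogously. Writing $k_{NE},k_{NW},k_{SE},k_{SW}$ for the four counts, we have $k=k_{NE}+k_{NW}+k_{SE}+k_{SW}$, so at least one of them is at most $\nicefrac{k}{4}$. The corresponding sequences, each of length $2\diam$, are $C_{NE}:=\langle \Up,\Right\rangle^{\diam}$, $C_{NW}:=\langle \Up,\Left\rangle^{\diam}$, $C_{SE}:=\langle \Down,\Right\rangle^{\diam}$, and $C_{SW}:=\langle \Down,\Left\rangle^{\diam}$.

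The central claim is that after applying $C_{NE}$ every particle rests at an NE convex corner, hence $|A'|\leq k_{NE}$; the other three cases follow by symmetry. To see this, fix a particle with start position $p_0=(x_0,y_0)$ and track its current position $p=(x,y)$. Because every $\Up$ and $\Right$ command is monotone (neither coordinate ever decreases), the $L^1$-displacement equals the total number of successful unit moves performed so far, and satisfies
\[
(x-x_0)+(y-y_0)\;\leq\;\dist(p_0,p)\;\leq\;\diam.
\]
For each $(\Up,\Right)$-pair there are two cases: either the particle moves in at least one of the two commands, so its displacement grows by at least $1$, or both commands are blocked, meaning $(x,y+1),(x+1,y)\notin P$, in which case $p$ is an NE convex corner and stays locked for the rest of the sequence. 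If the particle is never locked within the first $\diam$ pairs, then its displacement has saturated to $\diam$; any further $\Up$ or $\Right$ move would push it above $\diam$, contradicting the displayed bound, so both neighbors at the final $p$ must be blocked, i.e., $p$ is an NE corner.

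The main subtlety is the sufficiency of only $\diam$ pairs, which hinges on the monotonicity of $(\Up,\Right)$-motion: monotone motion turns the polyomino's diameter into a direct cap on the number of pairs in which a particle can still relocate, so $2\diam$ commands suffice. Combining the claim for all four sequences with the pigeonhole observation, applying the sequence that targets the smallest corner class yields a configuration $A'$ with $|A'|\leq\nicefrac{k}{4}$, using a command sequence of length $2\diam$ as required.
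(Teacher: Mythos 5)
Your proof is correct and follows essentially the same route as the paper's: classify the convex corners into the four diagonal types, pick the rarest by pigeonhole, and apply the corresponding alternating monotone sequence of length $2\diam$, using the fact that monotone paths are shortest paths (hence bounded by $\diam$) to show every particle ends locked in a corner of that type. Your displacement-counting argument is simply a more explicit version of the paper's "a monotone path is a shortest path, so $\ell\leq \diam$" step.
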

\begin{proof}
	We distinguish four types of convex corners; northwest (NW), northeast
(NE), southwest (SW), southeast (SE). By the pigeonhole principle, one of the
types occurs at most $\nicefrac{k}{4}$ times; without loss of generality, let
this be the NW corners.
	
	We show that after applying the sequence $\langle\Left,\Up\rangle^\diam$, every particle lies in a NW corner:
	Consider a particle~$\pi$ in pixel~$p$. Unless $\pi$ lies in a NW corner, it moves for at least one command  in $\{\Left,\Up\}$. Because $P$ is finite, there exists an $\ell$ large enough such that $\pi$ ends in a NW corner $q$ when the command sequence $\langle\Left,\Up\rangle^\ell$ is applied, i.e., there exists an $pq$-path consisting of at most $\ell$ commands of types  \Left and  \Up, respectively. Because a monotone path is a shortest path, it holds that  $\ell\leq \diam$.
\end{proof}

\subsection{Upper bounds on the length of gathering sequences}

Finally, we provide straightforward general upper bounds for the length of gathering sequences in simple as well as non-simple polyominoes; they follow easily by combining different preceding results.

\pagebreak
For simple polyominoes, combining \cref{thm:simpleTwo,lem:ReduceParticles} yields:
\begin{corollary}
	For a set of particles in a simple polyomino $P$ of diameter $D$ and $k$ convex corners, there exists a gathering sequence of length $\mathcal{O}(k\diam)$.
\end{corollary}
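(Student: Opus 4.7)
The plan is to combine the two preceding results in a direct two-phase strategy. First I would apply \cref{lem:ReduceParticles} to any initial configuration $A$, which in $2D$ commands produces a configuration $A'$ with at most $\nicefrac{k}{4}$ particles, all located in convex corners of one fixed type. Then I would repeatedly invoke \cref{thm:simpleTwo}: pick any two surviving particles, execute the length-$D$ DSP sequence that merges them, and iterate on the resulting configuration until only one particle remains.

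The key observation making this go through is that \cref{thm:simpleTwo} only bounds the length of the sequence that merges two designated particles; it does not require that these are the only particles in~$P$. The additional particles are simply carried along by each command, which can only cause further (unplanned) merges, never split existing groups. Hence after phase one we need at most $\nicefrac{k}{4}-1$ pairwise-merge rounds, each of length at most~$D$, so the total number of commands is at most
\[
2\diam + \Bigl(\tfrac{k}{4}-1\Bigr)\diam \;=\; \mathcal{O}(k\diam).
\]

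The only point that needs a brief sanity check is that simplicity of~$P$, which is the hypothesis of \cref{thm:simpleTwo}, is a property of the polyomino itself and not of the configuration, so it remains valid throughout all phases. Likewise, the quantities $k$ and $D$ do not change. I do not expect any genuine obstacle: the corollary is an immediate consequence of the preceding lemma and theorem, and the argument is essentially bookkeeping of the command counts in the two phases.
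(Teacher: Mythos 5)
Your proposal is correct and is essentially the paper's own argument: the corollary is stated there precisely as the combination of \cref{lem:ReduceParticles} (reduce to at most $\nicefrac{k}{4}$ particles in $2\diam$ steps) with repeated applications of \cref{thm:simpleTwo} (merge a pair in at most $\diam$ steps), giving $\mathcal{O}(k\diam)$ in total. Your added remark that uniform commands never separate particles, so bystanders are only carried along or merged further, is the right (if implicit in the paper) justification that the pairwise merges compose.
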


However, the combination of \cref{thm:NonSimpleTwo,lem:ReduceParticles} imply the following for polyominoes with holes:

\begin{corollary}\label{thm:NonSimpleMany}
	For any set of particles in a non-simple polyomino $P$ of diameter~$D$ and $k$ convex corners, there exists a gathering sequence of length at most $\mathcal{O}(k \diam^2)$.
\end{corollary}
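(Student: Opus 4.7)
The plan is to chain \cref{lem:ReduceParticles} with repeated invocations of \cref{thm:NonSimpleTwo}. First, I would apply \cref{lem:ReduceParticles} to the initial configuration $A$, obtaining a configuration $A'$ with at most $k/4$ distinct particle positions at the cost of $2\diam$ commands.

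Next, I would merge the surviving particles one pair at a time. In each round, I pick any two distinct particle positions and execute the command sequence guaranteed by \cref{thm:NonSimpleTwo} that merges these two particles; this sequence has length at most $\diam^2$. The key observation is that this sequence is determined purely by the current positions of the two selected particles (the MTE strategy simply moves the bottom-leftmost of the two to a fixed extremal pixel $q$ via shortest paths), so it can be executed on the actual configuration even in the presence of additional particles. The extra particles move along with the commands and may merge among themselves, but the two designated particles are guaranteed to coalesce within $\diam^2$ commands; after the round, the number of distinct particle positions has decreased by at least one.

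Starting from at most $k/4$ distinct positions, at most $k/4 - 1$ such rounds suffice to reduce the count to one. Combined with the initial reduction step, the total command count is bounded by
\[
2\diam + \left(\tfrac{k}{4}-1\right)\diam^2 \;=\; \mathcal{O}(k\,\diam^2),
\]
which yields the claim.

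The main conceptual obstacle is that \cref{thm:NonSimpleTwo} is stated for a polyomino containing only two particles, while many more particles coexist in our setting throughout the merging procedure. This is resolved by observing that particles never block one another; only blocked cells of $P$ do. Hence an MTE sequence tailored to a specific pair still brings that pair together within the promised $\diam^2$ steps, while any incidental merges among the remaining particles can only speed up the overall process.
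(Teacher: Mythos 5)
Your proposal is correct and follows essentially the same route as the paper, which obtains the corollary directly by combining \cref{lem:ReduceParticles} (reduce to at most $\nicefrac{k}{4}$ particle positions in $2\diam$ steps) with repeated applications of the MTE bound from \cref{thm:NonSimpleTwo}. Your extra remark that particles never block one another, so a pairwise MTE sequence remains valid in the presence of the other particles, is exactly the (implicit) justification the paper relies on.
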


By analyzing cuboids instead of rectangles, six directions of motion instead of four, and corners in eight quadrant directions instead of four, it seems plausible to obtain analogous results for three-dimensional settings.

	\section{Reinforcement Learning}\label{sec:tiltgather:deep-learning-impl}

When looking at the pixilated graphics in \cref{fig:tiltgather:over_time_graphic}, we are reminded of classical arcade games.
In 2013, Mnih et al.~\cite{mnih2013playing} proposed an algorithm based on deep reinforcement learning that was able to learn how to play Atari games and eventually reach the level of professional players~\cite{mnih2015human}.
Because such algorithms have now become freely available and reasonably easy to use, e.g., \emph{Stable Baselines}~\cite{stable-baselines}, we can let them `play' \textsc{Min-Gathering}.

In reinforcement learning, an \emph{agent} is exposed to an \emph{environment} in which it has to achieve an objective by performing a sequence of \emph{actions} that manipulate the \emph{state} of the environment.
Upon each action, the environment can yield a \emph{reward} (or penalty), or it can terminate.
During repeated trials, the agent has to learn a \emph{policy} that maximizes the (expected) total reward by choosing actions based on \emph{observations}, i.e., the perceived states.

Let us denote the state space by $\mathcal{S}$, the actions by $\mathcal{A}$, the transition function by ${T: \mathcal{S}\times \mathcal{A} \rightarrow \mathcal{S}}$, and the reward function by $R: \mathcal{S}\times \mathcal{A}\rightarrow \mathbb{R}$.
For simplicity, this notation neglects probabilistic elements that are common in other use cases.
Starting at an initial state $s_0 \in \mathcal{S}$, the agent selects an action $a_0\in \mathcal{A}$ based on the probabilities of the policy $\pi: \mathcal{S}\times \mathcal{A} \rightarrow [0, 1]$, resulting in the reward $r_0=R(s_0, a_0)$, and leading to the next state $s_1=T(s_0, a_0)$.
This is repeated until a final state is reached, after which the total reward $\sum_i r_i$ is evaluated.
We call such a sequence a \emph{period}.

We use a \emph{Convolutional Neural Network} (CNN) on an image of the particle locations as a policy.
The construction and training can be performed by readily available algorithms such that we only need to provide the environment, but it is useful to understand the basics of the algorithms before we design the environment.
While we could simply display the particles as a matrix and give a reward for successful gathering, the algorithms could not learn efficiently from such an environment.
In the following, we give a short introduction into the inner workings of these algorithms.

During training, we need to optimize the (initially random) network parameters to (incrementally) yield better actions.
A fundamental problem in reinforcement learning is that we need a sequence of actions to gain a reward, but the policy needs to be trained for individual actions.
Often, we even need to perform some penalized actions along the way.
How can we compute the advantage of actions not only based on their direct reward or penalty but based on their long-term influence, such that the policy can improve its output?
This is known as the \emph{credit assignment problem}, and a common strategy is to consider all future rewards (reduced by some \emph{discount factor}) for each action.

\pagebreak
Let us take a look at the simple REINFORCE algorithm~\cite{williams1992simple} as an example:
\begin{enumerate}
  \item Run multiple periods with the neural network (that initially only returns random values) and compute for each returned action the gradients of the network parameters which increase the probability of this action.
  \item Compute the advantages of the performed actions.
    For each period $(s_0, a_0, r_0), (s_1, a_1, r_1),$ $\ldots, (s_n, a_n, r_n)$, the advantage of performing $a_i$ in state $s_i$ is determined by $A(s_i, a_i)= \sum_{j=i, \ldots, n} \gamma^{j-i} \cdot R(s_j, a_j)$, where $\gamma\in (0, 1)$ is the discount factor.
    We standardize the advantages by subtracting the mean over all periods and dividing by the standard deviation.
    Positive standardized advantages now correspond to actions that lead to above average rewards.
  \item Multiply the gradients of the first step with the standardized advantages of the second step, and use their mean to perform a \emph{Gradient Ascent} step on the neural network.
    The more advantageous an action has been, the stronger it gets reinforced.
  \item Repeat this procedure until the policy performs sufficiently well.
\end{enumerate}

In our implementation, we use the more advanced \emph{Proximal Policy Optimization} (PPO) algorithm~\cite{schulman2017proximal}, which actually learns not only the best action but also the advantage.
The~differences between PPO and REINFORCE are significant in practice, but for designing a reasonably good reward function as in \cref{sec:tiltgather:drl:reward}, understanding the idea of the REINFORCE algorithms should suffice.
For a deeper understanding, we refer the curious reader to the extensive current literature, such as~\cite{graesser2019foundations}.

We design the environment for \textsc{Min-Gathering} as follows:
\begin{itemize}
    \item The observations are images of the particle locations scaled to $84\times 84$ pixels with a maximum filter for all instances.
      This is a common resolution keeping the CNN reasonably small, which not only speeds up the computations but can also help the CNN to generalize.
    \item We fill the environment completely with particles.
      A motion sequence that gathers all particles can be applied to any configuration.
      However, it is also possible to use a concrete configuration as the initial state.
    \item Each action is repeated automatically a fixed number of times, also called \emph{frame skipping}.
      This speeds up the learning process drastically, as a few random (repeated) actions can result in visible gathering progress.
      Otherwise, random actions have only a low chance of making notable progress.
    \item We not only provide the four basic motions as actions but also add diagonal motions, which are simulated by two basic motions in random order.
      This drastically improved the performance in preliminary experiments, especially with much \emph{frame skipping}.
      Diagonal movements are also a common pattern in the solutions.
      Providing them directly speeds up the learning process.
    \item The lowered resolution and the frame skipping makes it difficult to gather particles to a single location.
      Thus, we consider the particles as gathered if they are within a radius of \num{10} steps of an extreme point.
      The final gathering is then performed by the heuristic \textsc{MinSumToExtremum}, which only needs around \num{15} additional motions in our experiments as the particles are already close to an extreme point.
    \item If the particles are not gathered after \num{500} motions for \texttt{Corridor}, \num{800} motions for \texttt{Capillary}, or \num{3500} motions for \texttt{Brain}, the period gets terminated.
      This prevents the algorithm from spending too much time if the agent `gets lost'.
      These limits have been chosen based on the performance of the classical algorithms.
    \item The design of the reward function is the most important part; discussed in \cref{sec:tiltgather:drl:reward}.
    \item The extraction of the best solution directly from the learning process; see~\cref{sec:tiltgather:drl:implementation}.
\end{itemize}

\subsection{Reward}\label{sec:tiltgather:drl:reward}
Giving a reward only after all particles are gathered is not practical because every period that does not gather all particles looks equally bad.
Using this method, a command sequence that is capable of getting the particles at least close by looks as bad as a command sequence that does nothing at all.
This implies that as long as the particles are not gathered by chance, which is very unlikely, all actions are classified as bad.
If all action are equally bad, the policy cannot do anything but perform random motions.

A straightforward alternative is to give a reward every time the diameter of the particle swarm is reduced.
Of course, we only give a reward if the all-time minimal diameter is reduced; otherwise, repeatedly growing and shrinking the particle swarm may be learned as a `good' strategy.
To encourage a short gathering sequence, we can additionally give a small penalty for every action.

An issue with this reward function is that it is time-consuming to compute after every step.
We know that extreme points in the maze are good gathering locations.
Instead of computing the diameter, we could compute the maximal distance of a particle to an extreme point.
Let $E$ be the set of extreme points, then the maximal distance is defined as $\min_{e \in E} \max_{p\in P} \text{dist}\left(p, e\right)$.
This gives us the option to choose any extreme point, but it must be the same for all particles.
The distance of each location to an extreme point is fixed and can be computed ahead of time, thereby making this computation just a lookup for every~particle.

For instances like \texttt{Brain}, we first need to escape some recesses before we can minimize the distance.
This has the same issue as before of being hard to achieve through random moves.
Here we can support the detection of progress not only by providing reward for minimizing the maximal distance but also by providing reward for minimizing the mean~distance.

As the distances can vary strongly for different instances, we normalize the rewards such that each of the two reward components can only give an accumulated maximal reward of $1$, and the motion penalty an accumulated penalty of at most $-1$.
We normalize the reward of minimizing the maximal or mean distance by $1$ by dividing through the initial distance.
The rewards are normalized via dividing by the initial maximal resp.\ mean distance.
The motion penalty is normalized by the motion limit $L$ such that each motion gives a penalty of~$-\nicefrac{1}{L}$.

\begin{figure}[htb]
	\centering
  \begin{subfigure}[b]{0.45\columnwidth}
    \includegraphics[width=0.8\columnwidth]{./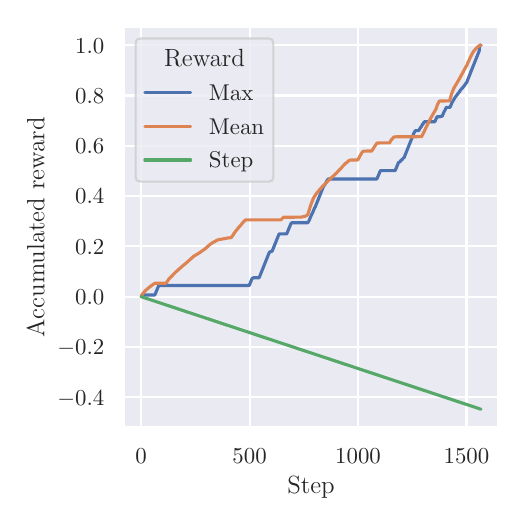}
    \caption{Accumulated rewards.}
  \end{subfigure}
  \begin{subfigure}[b]{0.45\columnwidth}
    \includegraphics[width=0.8\columnwidth]{./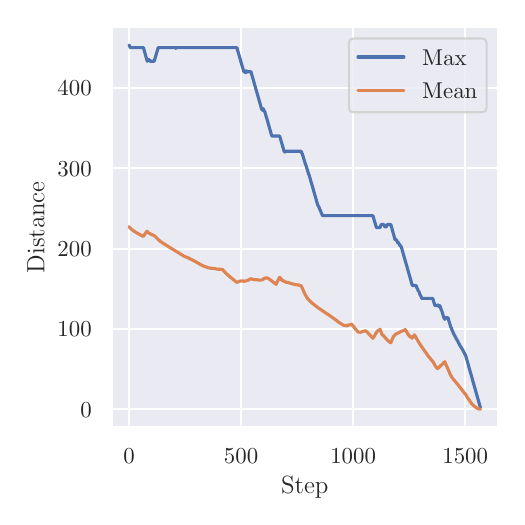}
    \caption{Maximal and mean distance.}
  \end{subfigure}
  \caption{Rewards and distances during the best movement sequence of the \texttt{Brain} instance, computed with reinforcement learning.
  We can see that the maximal distance, which we want to minimize, stagnates during the first \num{500} steps.
  Maximal and mean distances must both overcome local minima.}\label{fig:tilt:gather:drl:rewardexample}
\end{figure}
An example for the \texttt{Brain} instance is given in \cref{fig:tilt:gather:drl:rewardexample}.
We see that the mean distance supports the maximal distance well as it is more continuous than the maximal distance.
The maximal distance only markedly decreases after \num{500} steps, which is too long without proper feedback.
The maximal and mean distances must overcome local minima until the particles are gathered, showing that a na\"ive local search would not be successful.
Even if a local search were to succeed, it is unlikely that the gradients would always point in a direction that leads to a short gathering sequence.
The strength of the reinforcement learning approach, however, is that it automatically improves the reward function.

\subsection{Implementation}\label{sec:tiltgather:drl:implementation}
In this section, we show how to easily implement the optimizer.
Modern reinforcement learning libraries require barely any knowledge of reinforcement learning or neural networks.
Of course, some knowledge and experience is useful, especially for designing a good reward function.
This is comparable with modern MIP-solvers that can be used without much understanding of the underlying techniques, but creating a good formulation that can be quickly solved is more difficult.
A simple implementation that finds the shortest action sequence to achieve the objective can look as follows:
\begin{lstlisting}[language=Python]
# Create a learning environment
class LearningEnv(gym.Env):
  def __init__(self, simulation, limit, repeat):
    self.simulation = simulation
    self.rewards = Rewards(simulation, limit)
    self.repeat = repeat
    self.min_solution = None
    # Define input and output for the neural network
    self.action_space = gym.spaces.Discrete(simulation.no_actions)
    self.observation_space = gym.space.Box(low=0, high=255, 
                   shape = simulation.image_shape, dtype=np.uint8)

  def step(self, action):
    # Perform a step in the simulation and give feedback.
    for i in range(self.repeat):
      self.simulation.step(action)
    observation = self.simulation.as_image()
    reward, abort = self.rewards.eval_state()
    if self.simulation.is_gathered():
      abort = True
    new_sol = self.simulation.history
    if self.min_solution is None or
        len(self.min_solution) > len(new_sol):
      self.min_solution = new_sol

    return observation, reward, abort, {}

  def reset(self):
    # Reset simulation and rewards for the next try.
    self.simulation.reset()
    self.rewards.reset()

# Load instance
simulation = Simulation('my_instance.json')
# Allow up to 1000 steps and automatically repeat four times
env = LearningEnv(simulation, limit=1000, repeat=4)
# Automatically resize observations to 84x84
resized_env = ResizeObservation(env, shape=(84, 84))
# Automatically build neural network (CNN)
model = PPO('CnnPolicy', resized_env)
# Try for 300 000 steps
model.learn(total_timesteps=300_000)

# Output best solution
print("Solution:", env.min_solution)
\end{lstlisting}
This code uses Stable Baselines \num{3}~\cite{stable-baselines}, but interfaces in state-of-the-art machine learning libraries are highly volatile.
In this case, we are using the PPO algorithm, but it can be replaced by multiple other algorithms.

This implementation only misses two problem-specific details: the simulation and the reward function.
The simulation needs to be able to perform a sequence of individual actions encoded by discrete numbers (e.g., $0$ for up, $1$ for right, \ldots), return the current states as an image or a matrix, detect if the objective has been achieved, remember the corresponding solution, and reset to the initial state.
The reward function only needs to compare the current and the last state and rate the change or abort the current solution process, e.g., by a length limit, if it is not promising.

Note that this approach actually uses the training phase of the neural network to find a solution.
Usually, one trains the neural network to use it afterward.
As our environment does not change and the sequences remain valid, simply using the best encountered solution is the better strategy.

\section{Evaluation}
In this section, we evaluate the performance of the following approaches on practical instances.

\begin{itemize}
  \item The approach \textsc{StaticShortestPath} (SSP) iteratively merges pairs of particles by moving one to the position of the other, along a shortest path, see Algorithm~2 in~\cite{gathering_swarm}.

  \item  The approach \textsc{DynamicShortestPath} (DSP), as described in \cref{sec:gathering:algorithms}.

  \item The approach \textsc{MoveToExtremum} (MTE), as described in \cref{sec:gathering:algorithms}.
    Among the four possible commands, we choose an extremum that minimizes the initial sum of distances to both particles.
  
  \item The heuristic \textsc{MinSumToExtremum} (MSTE) generalizes the idea of \textsc{MTE}.
    It selects an extremum with the smallest initial sum of distances to all particles and iteratively performs a command that decreases this sum the most.
    If no command decreases the sum, two particles are selected and merged by MTE\@. Afterward, MSTE resumes.

  \item Additionally, we evaluate the machine learning approach \textsc{ReinforcementLearning}~(RL), as described in \cref{sec:tiltgather:deep-learning-impl}.
    We use the default parameters for all mazes and only vary the motion repetitions, limits, and time steps.
    The \texttt{Corridor} and the \texttt{Capillary} maze are trained over \num{300000} time steps with a frame skipping of \num{4}.
    The \texttt{Brain} maze is trained over \num{600000} time steps with a frame skipping of \num{16}.
\end{itemize}

The experiments were performed on over \num{130} random particle configurations with \num{1000} particles in each environment.
Every configuration was solved by all strategies to ensure comparability.
The preprocessing used a random direction to move the particles into corners.
We used a workstation equipped with an AMD Ryzen 7 1700 CPU with $8\times \SI{3.0}{\GHz}$ and \SI{32}{\giga\byte} memory, and an Nvidia GTX 1050 Ti GPU with \SI{4}{\giga\byte} memory.

We compare the combinatorial strategies \textsc{SSP}, \textsc{DSP}, \textsc{MTE}, \textsc{MSTE} and their options.
For these strategies, we evaluate the options of (1) choosing a pair uniformly at random, or (2) choosing the pair with maximal distance.
Additionally, we analyze the advantage of using the preprocessing strategy that moves all particles to corners, as described in \cref{sec:tiltgather:cornering}.

\paragraph*{Results}
\cref{fig:tiltgather:combinatorialAlgs} shows that \textsc{MSTE} performs best on average, and that the pair selection and preprocessing options only have a small influence on it.
Only for the small \texttt{Corridor} instance, \textsc{SSP} with most distanced pairs and preprocessing performs visibly better.
For the shortest path strategies \textsc{SSP} and \textsc{DSP}, the most distanced pairs show a significant advantage for the first two environments, while it has only a small influence on \textsc{MTE} and \textsc{MSTE}.
\begin{figure}[tb!]
  \centering
  \begin{subfigure}[t]{0.3\columnwidth}
    \includegraphics[width=\columnwidth]{./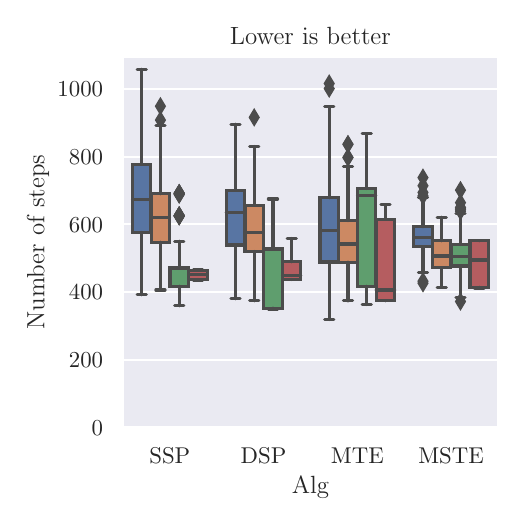}
    \caption{Corridor}
  \end{subfigure}
  \begin{subfigure}[t]{0.3\columnwidth}
    \includegraphics[width=\columnwidth]{./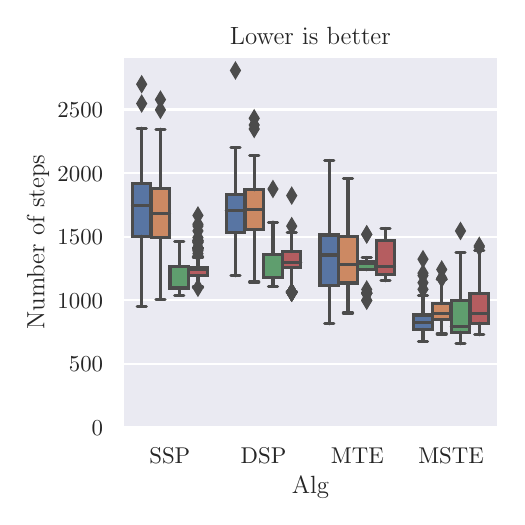}
    \caption{Capillary}
  \end{subfigure}
  \begin{subfigure}[t]{0.3\columnwidth}
    \includegraphics[width=\columnwidth]{./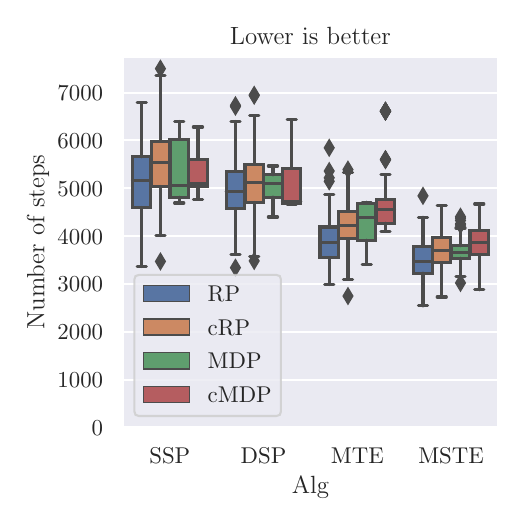}
    \caption{Brain}
  \end{subfigure}
  \caption{Comparison of the combinatorial algorithms with different pair selections (random (RP) or most distanced (MP)) and optional corner preprocessing (cRP resp.\ cMP).}\label{fig:tiltgather:combinatorialAlgs}
\end{figure}

We also experimented with optimal pair merging sequences computed by an $A^*$-algorithm, but in preliminary experiments we encountered worse results at a higher computational complexity such that we ignored this approach for the final experiments.
An explanation for the worse results could be that, e.g., \textsc{MTE} and \textsc{MSTE} are guiding many more particles than just the pair to the extreme position and, thus, are more efficient for gathering all particles even if the strategy may be suboptimal for just the pair.

\begin{figure}[htb]
	\centering
  \begin{subfigure}[t]{0.45\columnwidth}
  \includegraphics[width=\columnwidth]{./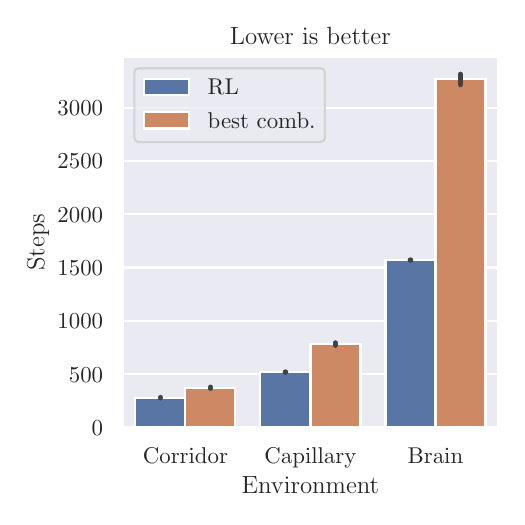}
  \caption{RL vs.\ best combinatorial algorithm.}\label{fig:tiltgather:rlvsbest}
  \end{subfigure}
  \begin{subfigure}[t]{0.45\columnwidth}
    \includegraphics[width=\columnwidth]{./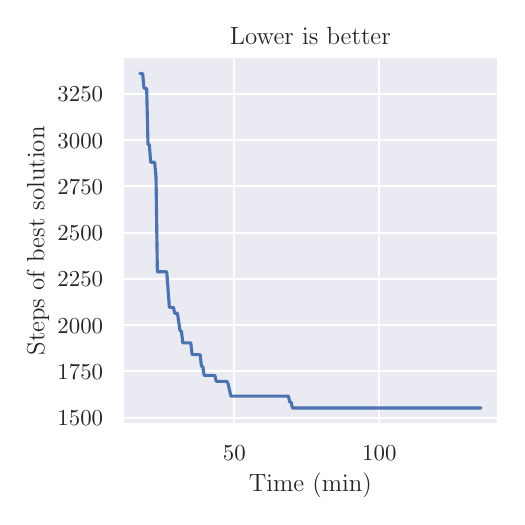}
    \caption{Improvement of RL for \texttt{Brain} over time.}\label{fig:tiltgather:rltime}
  \end{subfigure}
  \caption{The reinforcement learning approach compared to the best results of the combinatorial algorithms. The reinforcement learning approach shows to be superior especially for the complex environments. It already yields superior solutions after a few minutes.}
\end{figure}
When comparing the best solution of the previous algorithms with the solution returned by the reinforcement learning approach in~\cref{fig:tiltgather:rlvsbest}, the reinforcement learning approach shows clear superiority despite assuming a fully filled environment.
Especially for the \texttt{Brain} environment, the reinforcement learning approach yields command sequences of less than half the length of the other algorithms.
The reinforcement learning approach needs over two hours to train, compared with just a few minutes for the execution of \textsc{MSTE}, but already after a few minutes it yields superior solution that improves further over time as can be seen in~\cref{fig:tiltgather:rltime}.
Additionally, the runtime can be improved further by using parallel agents, as PPO supports parallel optimization.

Overall, the reinforcement learning approach is superior and can deal much better with the complex particle configurations than our combinatorial algorithms.
Contrary to the combinatorial algorithms, it can also easily utilize parallelization.
In his master's thesis, Konitzny~\cite{makonitzny} performs a deeper and more extensive analysis of this technique and achieves even better results by lower level optimizations, e.g., replacing the activation neurons.
Contrary to our work, he uses trained agents to perform the gathering; this also allows non-deterministic environments with continuous physics, showing the flexibility of this~approach.

\paragraph*{Oblivious Merging}\label{sec:tiltgather:experiments:oblivious_merging}
In practice, it may be costly to determine the position of the individual particles;
therefore, \emph{oblivious} approaches that do not need this information can be of interest.
Such a setting is equivalent to the situation where initially, each pixel contains a particle;
a gathering sequence for all particles is certainly a gathering sequence for any other (partial) initial distribution of particles.
Recall~\cref{cor:np_hard_2}, implying that this problem remains \NP-complete.
In order to estimate the cost of this restriction in practice, we study how the number of populated grid cells behaves over time, depending on the initial number of particles; see \cref{fig:tiltgather:over_time_plot,fig:tiltgather:over_time_graphic,fig:tiltgather:over_time_graphic2}.

\begin{figure}
  \centering
  \includegraphics[width=0.5\columnwidth]{./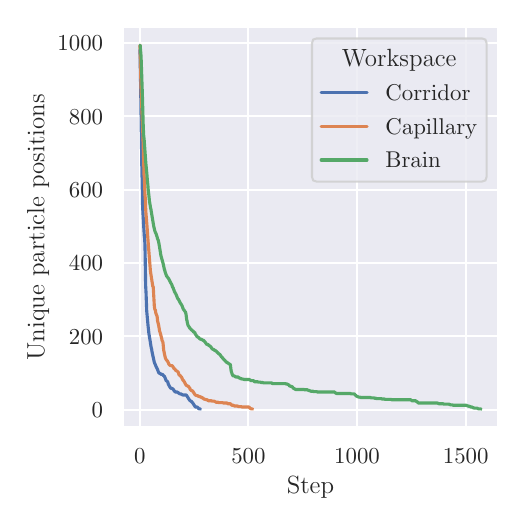}
  \caption{Number of particle groups over time using \textsc{RL} for \num{100} random configurations each of \num{1000} particles.
  For all workspaces, the number of unique particle positions drops very quickly, and collecting the last \num{100} particles comprises the majority of the steps.}\label{fig:tiltgather:over_time_plot}
\end{figure}
As the number of populated grid cells decreases very sharply in the beginning and almost all steps are used to merge the few remaining groups of particles, we conclude that missing knowledge of the position of individual particles has negligible cost for uniform distributions.

\begin{figure*}[tbhp!]
	\newcommand{\fwidth}{0.3\columnwidth}
	\centering
	\includegraphics[width=\fwidth]{./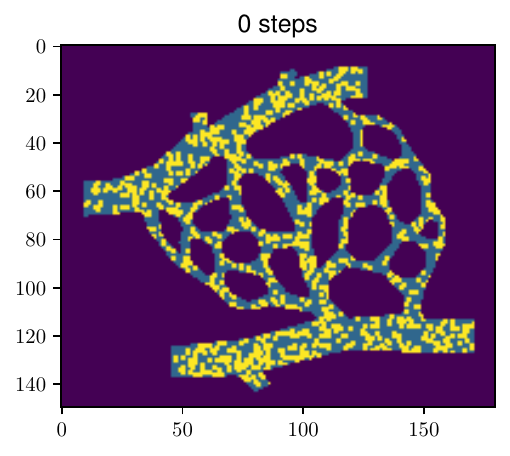}
	\includegraphics[width=\fwidth]{./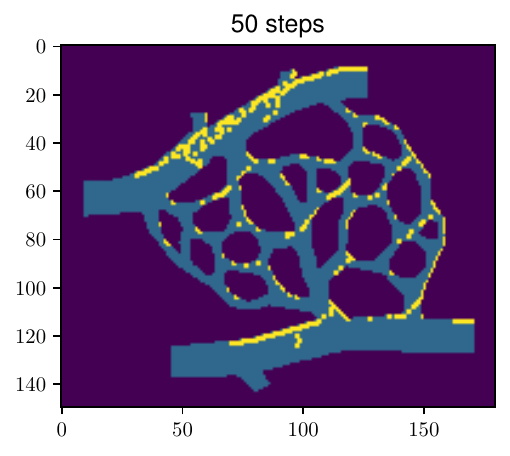}
	\includegraphics[width=\fwidth]{./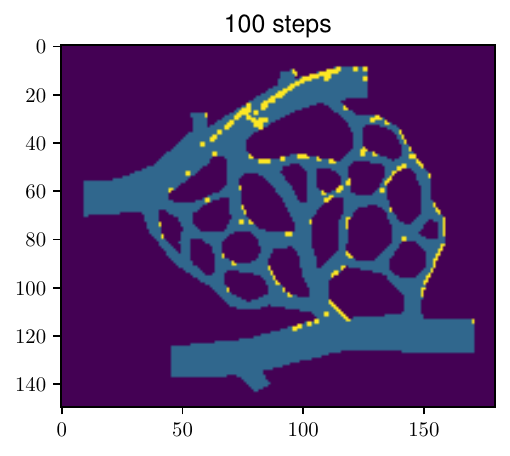}
	\includegraphics[width=\fwidth]{./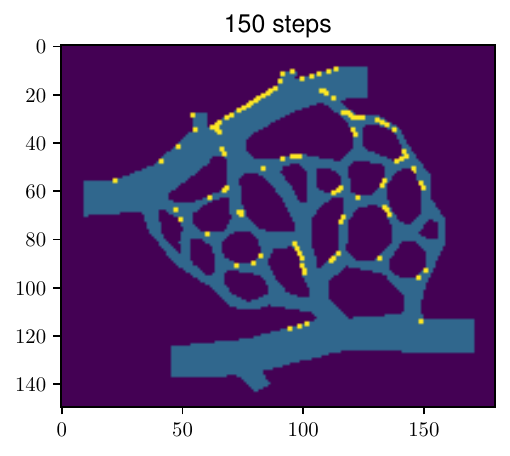}
	\includegraphics[width=\fwidth]{./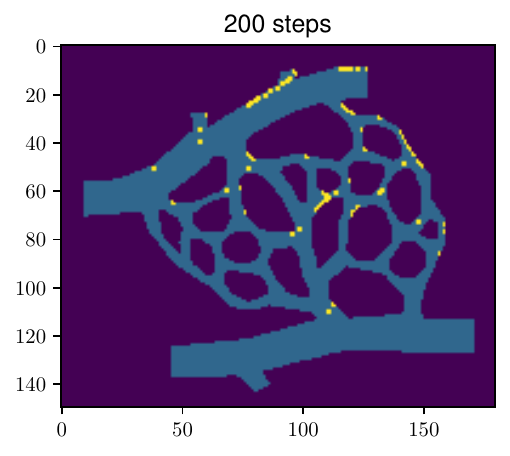}
	\includegraphics[width=\fwidth]{./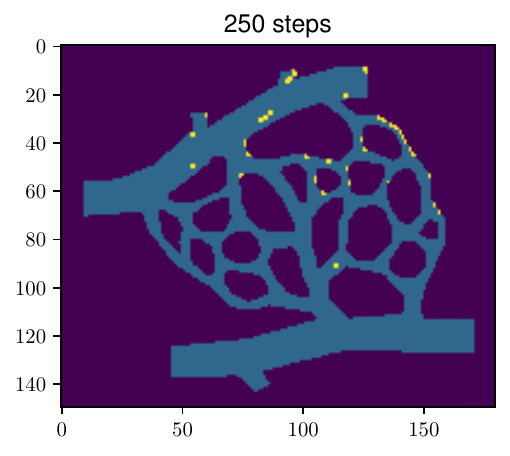}
	\includegraphics[width=\fwidth]{./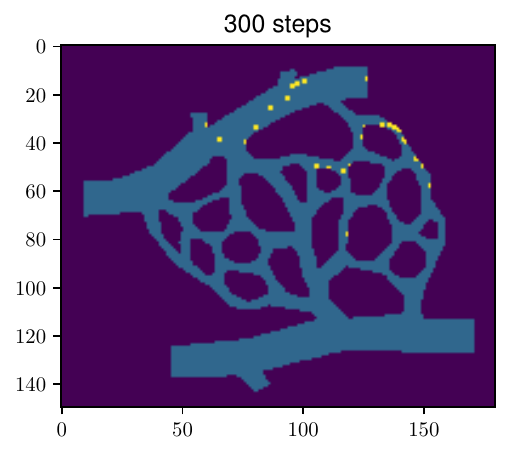}
	\includegraphics[width=\fwidth]{./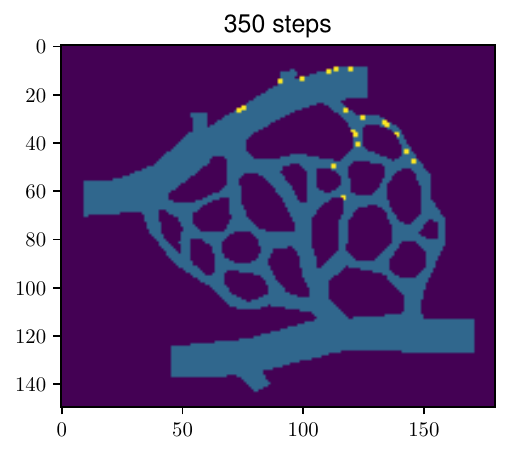}
	\includegraphics[width=\fwidth]{./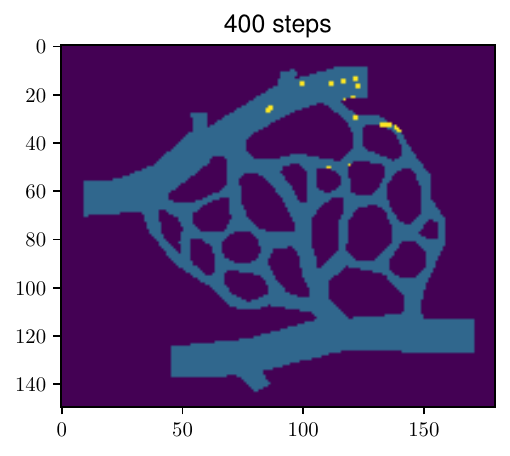}
	\includegraphics[width=\fwidth]{./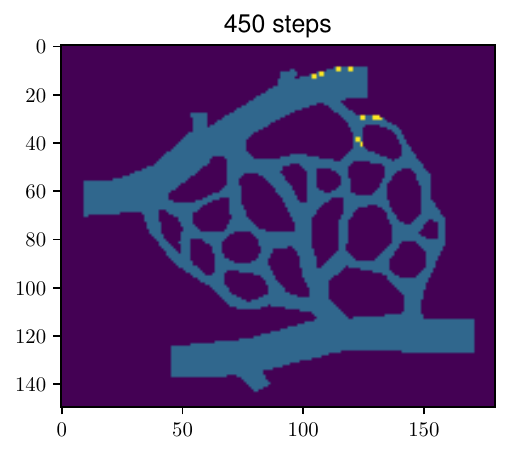}
	\includegraphics[width=\fwidth]{./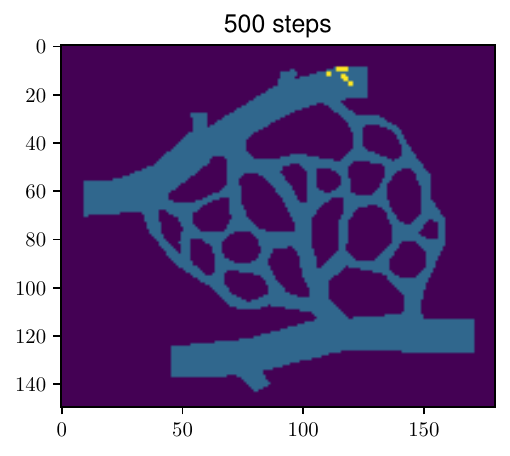}
	\includegraphics[width=\fwidth]{./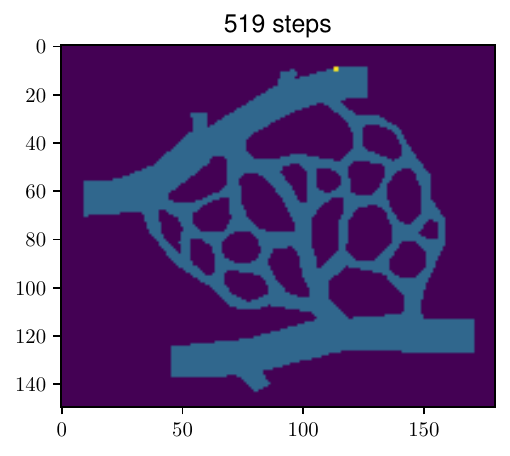}
	\caption{The process of gathering \num{1000} particles in \texttt{Vessel} with \textsc{RL}. The visualization uses a maximum filter to improve visibility. Gathered particles aggregate to a single particle. We see that the particles quickly collapse to a few clusters.}\label{fig:tiltgather:over_time_graphic}
  \end{figure*}
  
  \begin{figure*}[tbhp]
	\newcommand{\fwidth}{0.3\columnwidth}
	\centering
	\includegraphics[width=\fwidth]{./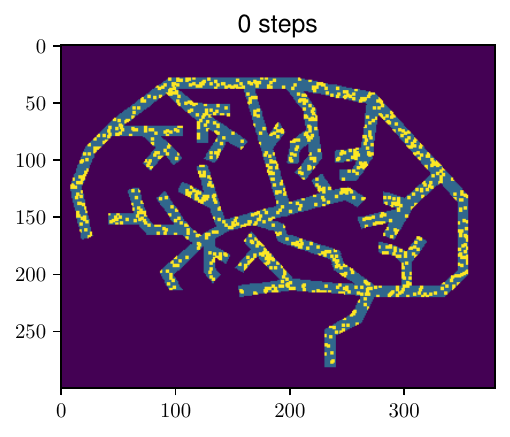}
	\includegraphics[width=\fwidth]{./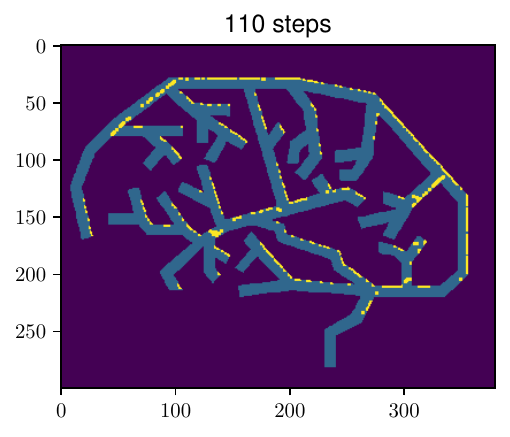}
	\includegraphics[width=\fwidth]{./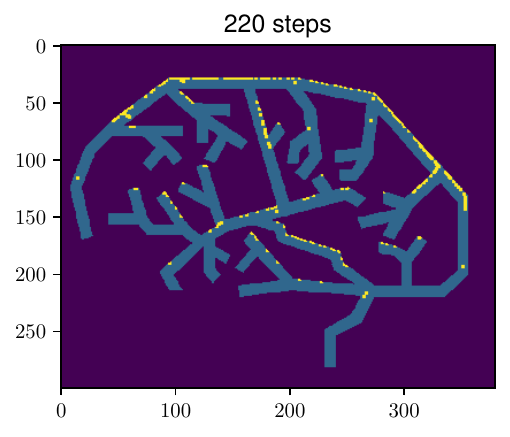}
	\includegraphics[width=\fwidth]{./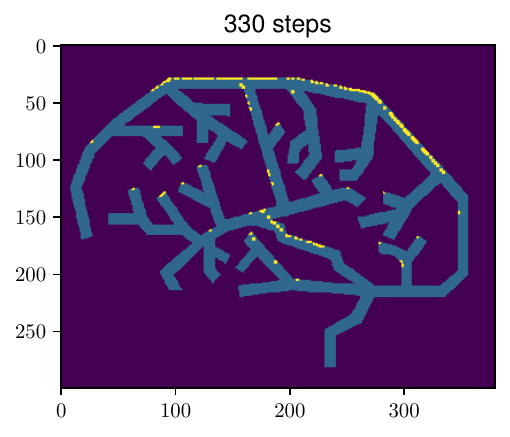}
	\includegraphics[width=\fwidth]{./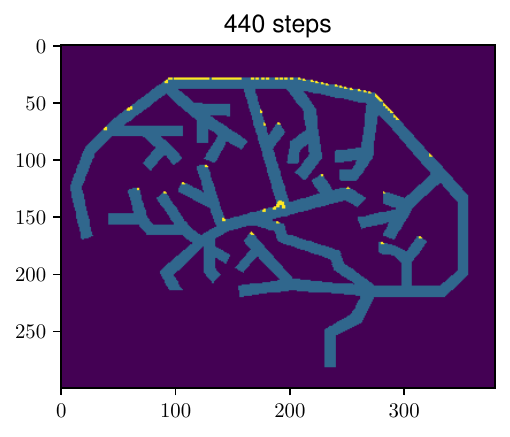}
	\includegraphics[width=\fwidth]{./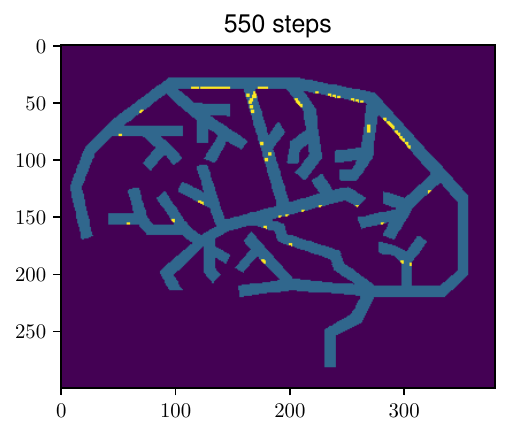}
	\includegraphics[width=\fwidth]{./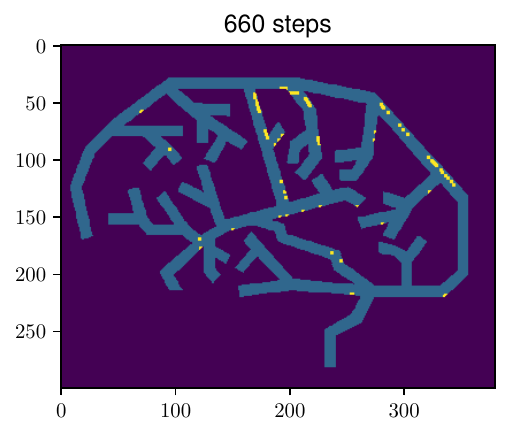}
	\includegraphics[width=\fwidth]{./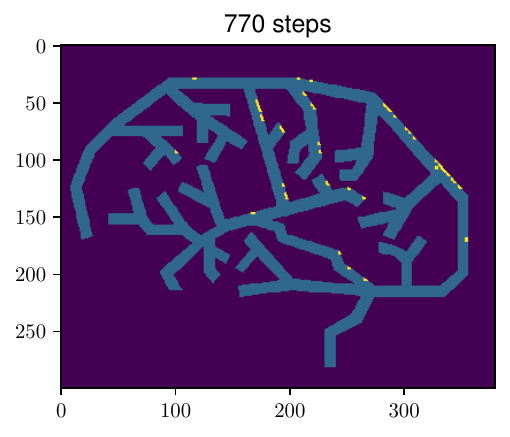}
	\includegraphics[width=\fwidth]{./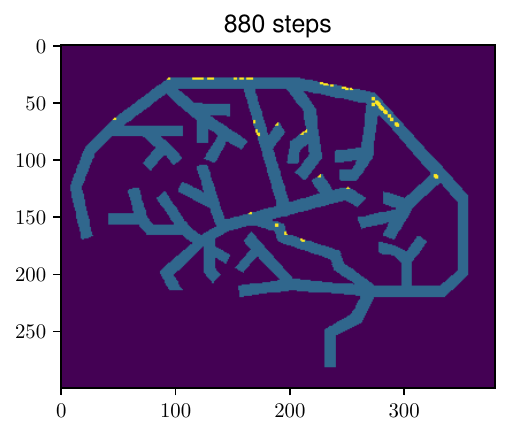}
	\includegraphics[width=\fwidth]{./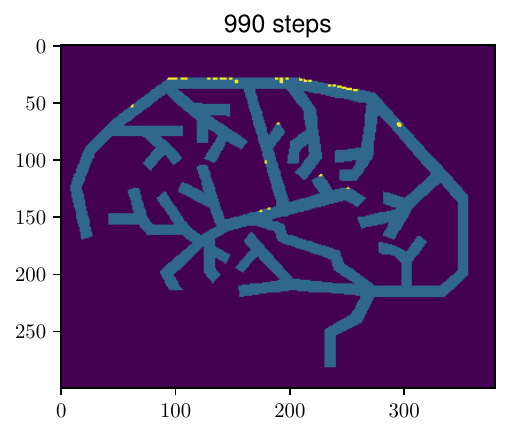}
	\includegraphics[width=\fwidth]{./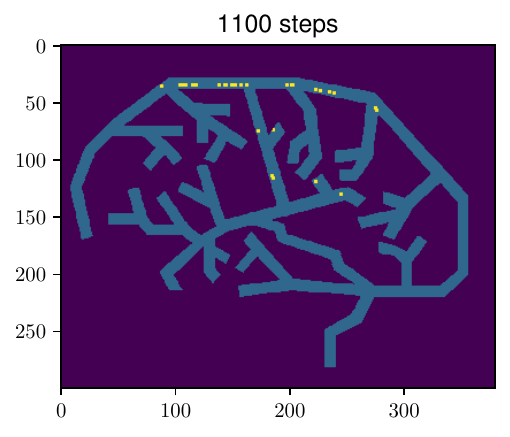}
	\includegraphics[width=\fwidth]{./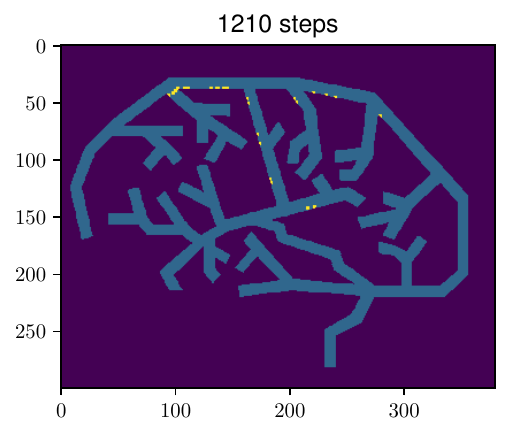}
	\includegraphics[width=\fwidth]{./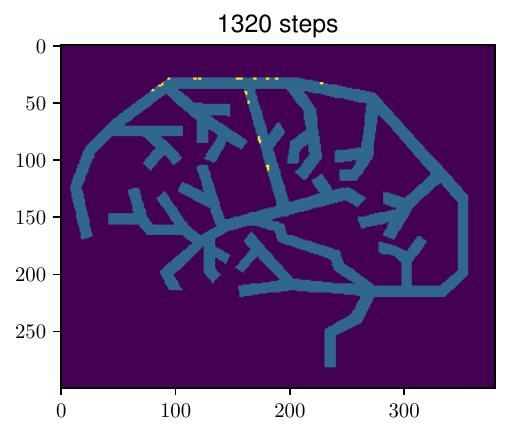}
	\includegraphics[width=\fwidth]{./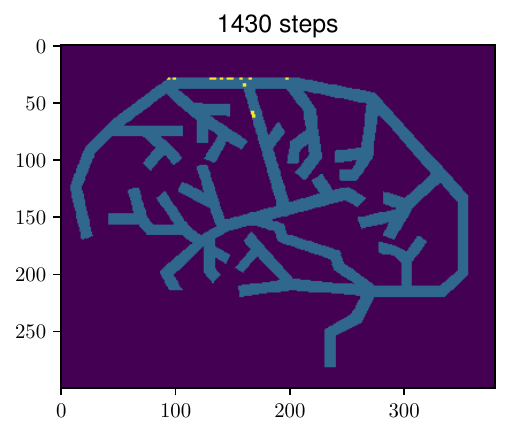}
	\includegraphics[width=\fwidth]{./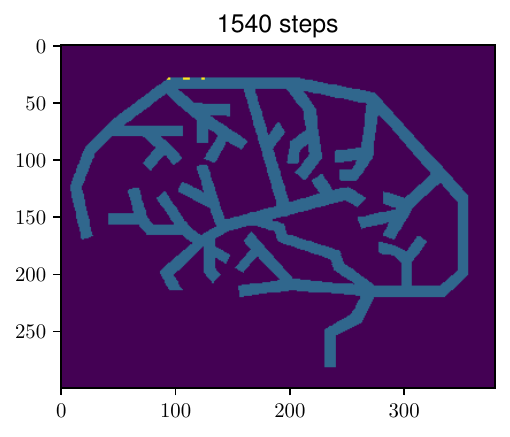}
	\caption{The process of gathering \num{1000} particles in \texttt{Brain} with \textsc{RL}. The visualization uses a maximum filter to improve visibility. We see how the algorithm quickly moves the particles out of the branches, but sometimes a particle drops back in.}\label{fig:tiltgather:over_time_graphic2}
  \end{figure*}

\section{Conclusions}
We have described a spectrum of methodological progress on an important problem of great practical relevance, in a discrete two-dimensional grid setting.

In particular, we showed that deciding whether ``short'' gathering sequences exist is \NP-complete, even for thin polyominoes.
In fact, we obtained two even stronger results, that is, the problem remains \NP-complete when target locations for the gathering sequence are prescribed, as well as when every pixel initially contains a particle. 

On the algorithmic side, we described an approach that drastically reduces the total number of particles (compared to the number of convex corners of the workspace), and different strategies that eventually merge two marked particles.
Furthermore, we obtained general upper bounds on the length of gathering sequences in both simple polyominoes and polyominoes with holes.

Our algorithmic simulations indicate the strength of our methods.
However, the different outcomes for deterministic as well as ML approaches
indicate that further, more detailed algorithmic studies are warranted to
understand the most successful line of attack; this includes studies 
of the necessary trade-off between the computation time and the number of actuation steps, but also includes modified models in which an actuation step may be able to move particles by more than an elementary distance.

\smallskip
Another interesting question is how we can deal with random errors in actuation and navigation? Our insights into
oblivious methods clearly indicate that these should remain tractable,
but more detailed considerations for frequency and amount of 
errors should provide quantifications and error-correcting approaches.

\smallskip
Finally, it is typically not necessary for our application scenarios 
to gather \emph{all} particles in a target area; moving an appropriate fraction
should usually suffice. 
\Cref{fig:tiltgather:over_time_graphic} visualizes a slightly different aspect, but still highlights the prospect that a considerably reduced number of actuation steps may be~achieved.

\smallskip
Although our exposition focuses on two-dimensional scenarios, a generalization to three-dimensional environments seems to be relatively straightforward in most cases, with some adjustments in runtime.

    \bibliography{bibliography.bib}
\end{document}